\newcommand{\opt}{\ensuremath{\operatorname{OPT}}}
\newcommand{\expec}{\ensuremath{\mathbb{E}}}
\newcommand{\dist}{\ensuremath{\mathcal{D}}}
\newcommand{\principal}{\ensuremath{\mathcal{P}}}
\newcommand{\princ}{\principal}
\newcommand{\agent}{\ensuremath{\mathcal{A}}}
\newtheorem{lemma}{Lemma}
\newtheorem{theorem}{Theorem}
\newtheorem{proposition}{Proposition}
\newtheorem{corollary}{Corollary}
\newtheorem{example}{Example}
\newtheorem{remark}{Remark}
\newcommand{\av}{\ensuremath{a}}
\newcommand{\pv}{\ensuremath{b}}
\newcommand{\pr}{\ensuremath{p}}
\title{Delegated Online Search}
\author{Pirmin Braun \thanks{Goethe University Frankfurt, Germany, Email: pirminbraun16@gmail.com.}
    \and Niklas Hahn \thanks{Goethe University Frankfurt, Germany, Email: nhahn@em.uni-frankfurt.de. Hahn gratefully acknowledges the support of GIF grant I-1419-118.4/2017.}
    \and Martin Hoefer \thanks{Goethe University Frankfurt, Germany, Email: mhoefer@em.uni-frankfurt.de. Hoefer gratefully acknowledges the support of GIF grant I-1419-118.4/2017 and DFG grants Ho 3831/5-1, 6-1, and 7-1.}
    \and Conrad Schecker \thanks{Goethe University Frankfurt, Germany, Email: schecker@em.uni-frankfurt.de.}
}
\date{}
\begin{document}
        \maketitle

    \begin{abstract}
        In a delegation problem, a \emph{principal} $\princ$ with commitment power tries to pick one out of $n$ options. Each option is drawn independently from a known distribution. Instead of inspecting the options herself, $\princ$ delegates the information acquisition to a rational and self-interested \emph{agent} $\agent$. After inspection, $\agent$ proposes one of the options, and $\princ$ can accept or reject.

        Delegation is a classic setting in economic information design with many prominent applications, but the computational problems are only poorly understood.
        In this paper, we study a natural \emph{online} variant of delegation, in which the agent searches through the options in an online fashion. For each option, he has to irrevocably decide if he wants to propose the current option or discard it, before seeing information on the next option(s). How can we design algorithms for $\princ$ that approximate the utility of her best option in hindsight?

        We show that in general $\princ$ can obtain a $\Theta(1/n)$-approximation and extend this result to ratios of $\Theta(k/n)$ in case (1) $\agent$ has a lookahead of $k$ rounds, or (2) $\agent$ can propose up to $k$ different options. We provide fine-grained bounds independent of $n$ based on two parameters. If the ratio of maximum and minimum utility for $\agent$ is bounded by a factor $\alpha$, we obtain an $\Omega(\log\log \alpha / \log \alpha)$-approximation algorithm, and we show that this is best possible. Additionally, if $\princ$ cannot distinguish options with the same value for herself, we show that ratios polynomial in $1/\alpha$ cannot be avoided. If the utilities of $\princ$ and $\agent$ for each option are related by a factor $\beta$, we obtain an $\Omega(1 / \log \beta)$-approximation, where $O(\log \log \beta / \log \beta)$ is best possible.

    \end{abstract}




	\section{Introduction}

	The study of delegation problems is a prominent area in economics with numerous applications. There are two parties -- a decision maker (called \emph{principal}) $\princ$ and an \emph{agent} $\agent$. $n$ actions or \emph{options} are available to $\princ$. Each option has a utility for $\princ$ and a (possibly different) utility for $\agent$, which are drawn from a known distribution $\dist$. Instead of inspecting options herself, $\princ$ delegates the search for a good option to $\agent$. $\agent$ sees all realized utility values and sends a signal to $\princ$. Based on this signal (and $\dist$), $\princ$ chooses an option. Both parties play this game in order to maximize their respective utility from the chosen option. 
	
	Many interesting applications can be captured within this framework. For example, consider a company that is trying to hire an expert in a critical area. Instead of searching the market, the company delegates the search to a head-hunting agency that searches the market for suitable candidates. Alternatively, consider an investor, who hires a financial consultant to seek out suitable investment opportunities. Clearly, principal and agent might not always have aligned preferences. While the investor might prefer investments with high interest rates, the financial consultant prefers selling the products for which he gets a provision.

	In applications such as searching for job candidates or financial investments, availability of options often changes over time, and the pair of agents needs to solve a stopping problem. For example, many lucrative financial investment opportunities arise only within short notice and expire quickly. Therefore, a consultant has to decide whether or not to recommend an investment without exactly knowing what future investment options might become available. Hence, $\agent$ faces an online search problem, in which the $n$ options are realized in a sequential fashion. After seeing the realization of option $i$, he has to decide whether to propose the option to $\princ$ or discard it. If the option is proposed, $\princ$ decides to accept or reject this option and the process ends. Otherwise, the process continues with option $i+1$.

	In the study of delegation problems, $\princ$ usually has commitment power, i.e., $\princ$ specifies in advance her decision for each possible signal, taking into account the subsequent best response of $\agent$. This is reasonable in many applications (e.g., an investor can initially restrict the investment options she is interested in, or the company fixes in advance the required qualifications for the new employee). Interestingly, although $\princ$ commits and restricts herself in advance, this behavior is usually in her favor. The induced best response of $\agent$ can lead to much better utility for $\princ$ than in any equilibrium, where both parties mutually best respond. Using a revelation-principle style argument, the communication between $\princ$ and $\agent$ can be reduced to $\agent$ revealing the utilities of a single option and $\princ$ deciding to accept or reject that option (for a discussion, see, e.g.~\cite{KleinbergK18}).

	The combination of online search and delegation has been examined before, albeit from a purely technical angle. Kleinberg and Kleinberg~\cite{KleinbergK18} recently designed approximation algorithms for delegation, using which $\princ$ can obtain a constant-factor approximation to the expected utility of her best option in hindsight. Their algorithms heavily rely on techniques and tools developed in the domain of prophet inequalities. However, they are applied to an \emph{offline} delegation problem. Instead, our model is an extension of~\cite{KleinbergK18} to online search. Interestingly, our results reveal a notable contrast -- in online delegation a constant-factor approximation might be impossible to achieve. In fact, in the worst case the approximation ratio can be as low as $\Theta(1/n)$ and this is tight. Motivated by this sharp contrast, we provide a fine-grained analysis based on two natural problem parameters: (1) the discrepancy of utility for the agent, and (2) the misalignment of agent and principal utilities. 

\subsection{Model}

We study \emph{online delegation} between principal $\princ$ and agent $\agent$ in (up to) $n$ rounds. In every round $i$, an option is drawn independently from a known distribution $\dist_i$ with finite support $\Omega_i$ of size $s_i$. We denote the options of $\dist_i$ by $\Omega_i = \{\omega_{i1}, \ldots, \omega_{i,s_i}\}$ and the random variable of the draw from $\dist_i$ by $O_i$. For every $i \in [n]$ and $j \in [s_i]$, the option $\omega_{ij}$ has probability $\pr_{ij}$ to be drawn from $\dist_i$. If this option is proposed by $\agent$ and chosen by $\princ$, it yields utility $\av_{ij} \ge 0$ for $\agent$ and $\pv_{ij} \ge 0$ for $\princ$.

We assume that $\princ$ has commitment power. Before the start of the game, she commits to an \emph{action scheme} $\varphi$ with a value $\varphi_{ij} \in [0,1]$ for each option $\omega_{ij}$. $\varphi_{ij}$ is the probability that $\princ$ accepts option $\omega_{ij}$ when it is proposed by $\agent$ in round $i$. We will sometimes consider \emph{deterministic} action schemes, which we represent using sets $E_i = \{ \omega_{ij} \mid \varphi_{ij} = 1\}$ of \emph{acceptable options} in each round $i \in [n]$.

In contrast to $\princ$, $\agent$ gets to see the $n$ random draws from the distributions in an online fashion. He has to decide after each round whether he proposes the current option $O_i$ to $\princ$ or not. If he decides to propose it, then $\princ$ decides according to $\varphi$ whether or not she accepts the option. If $\princ$ accepts, the respective utility values are realized; if not, both players receive utility 0. In either case, the game ends after $\princ$ decides. Clearly, both players strive to maximize their expected utility.

Initially, both players know the distribution $\dist_i$ for every round $i \in [n]$. The sequence of actions then is as follows: (1) $\princ$ decides $\varphi$ and communicates this to $\agent$; (2) in each round $i$, $\agent$ sees $O_i \sim \dist_i$ and irrevocably decides to propose or discard $O_i$; (3) as soon as $\agent$ decides to propose some option $O_i = \omega_{ij}$, then $\princ$ accepts it with probability $\varphi_{ij}$, and the game ends. 

Because $\agent$ knows the distributions and the action scheme $\phi$ of upcoming rounds which modifies his expected utility from proposed options, $\agent$ essentially faces an online stopping problem which he can solve using backwards induction. Hence, we can assume without loss of generality that all decisions (not) to propose an option by $\agent$ are deterministic. That is, if the expected utility from the realization in the current round is greater than the expected utility obtained in the upcoming round, propose the current option, otherwise, wait for the next round.
To avoid technicalities in the analysis, we assume that $\agent$ breaks ties in favor of $\princ$.


Our goal is to design action schemes $\varphi$ with high expected utility for $\princ$. We compare the expected utility to the one in the non-delegated (online) optimization problem, where $\princ$ searches through the $n$ realized options herself. The latter is an elementary stopping problem, for which a classic prophet inequality relates the expected utility of the optimal online and offline strategies by at most a factor of 2~\cite{KrengelS77,KrengelS78}.

We also analyze scenarios with \emph{oblivious} and \emph{semi-oblivious proposals}. In both these scenarios, $\agent$ reveals only the utility value $\pv_{ij}$ for $\princ$ when proposing an option (but not his own value $\av_{ij}$). In contrast, when $\princ$ gets to see the utility values of both agents, we term this \emph{conscious proposals}. The difference between semi-oblivious and (fully) oblivious scenarios lies in the prior knowledge of $\princ$. In the semi-oblivious scenario, $\princ$ is fully aware of the distributions, including all potential utility values $\av_{ij}$ for $\agent$. In the oblivious scenario, $\princ$ initially observes the probabilities of all options along with her utility values $\pv_{ij}$, but the values $\av_{ij}$ of $\agent$ remain unknown to $\princ$ throughout. In the scenarios with restricted discrepancy studied in Section~\ref{sec:max-min-agent-ratio}, $\princ$ is aware of the bound $\alpha = \max_{i,j} \av_{ij} / \min_{i,j} \av_{ij}$.




\begin{example}
Consider the following simple example for illustration. We consider deterministic strategies by $\princ$ and conscious proposals. There are two rounds with the options distributed according to Table~\ref{table:example}.
\begin{table}
    \centering

\begin{tabular}{c||cc|cc}
    round $i$ & \multicolumn{2}{c|}{1} & \multicolumn{2}{c}{2} \\ \hline
    option $\omega_{ij}$ & $\omega_{11}$ & $\omega_{12}$ & $\omega_{21}$ & $\omega_{22}$ \\
    value-pair $(a_{ij},b_{ij})$ & (3,1)  & (3,8) & (2,4) & (16,4) \\
    probability $p_{ij}$ & 0.75 & 0.25 & 0.75 & 0.25 \\
\end{tabular}

\caption{An example instance}
\label{table:example}
\end{table}
%
%
For the benchmark, we assume that $\princ$ can see and choose the options herself. The best option is $\omega_{12}$. If this is not realized in round 1, the option realized in round 2 is the best choice. Note that this optimal choice for $\princ$ can be executed even in an online scenario, where she first sees round 1 and gets to see round 2 only after deciding about round 1. The expected utility of this best (online) choice for $\princ$ is $5$.

Now in the delegated scenario, suppose $\princ$ accepts all option $\omega_{22}$. Then $\agent$ would always wait for round 2 and hope for a realization of $\omega_{22}$, even if $\omega_{21}$ would not be accepted by $\princ$. Hence, accepting $\omega_{22}$ leads to an expected utility for $\princ$ of at most $4$.
In contrast, the optimal decision scheme for $\princ$ is to accept only $\omega_{12}$ and $\omega_{21}$ with an expected utility of $4.25$.

Note that for the (semi-)oblivious scenario, $\princ$ cannot distinguish the options in round $2$ which decreases her expected utility to at most 4.

This shows that $\princ$ has to strike a careful balance between (1) accepting a sufficient number of high-profit options to obtain a high expected utility overall and (2) rejecting options to motivate $\agent$ to propose options that are better for $\princ$ in earlier rounds. \hfill $\blacklozenge$
\end{example}

\subsection{Contribution}
    In Section~\ref{sec:lb} we show that the worst-case approximation ratio for online delegation is $\Theta(1/n)$ and this is tight. Intuitively, $\agent$ waits too long and forgoes many profitable options for $\princ$. $\princ$ can only force $\agent$ to propose earlier if she refuses to accept later options -- this, however, also hurts the utility of $\princ$. The instances require a ratio of maximum and minimum utility values for $\agent$ that is in the order of $n^{\Theta(n)}$.
    We further show that this lower-bound instance can be used for extensions in which (1) $\agent$ has a lookahead of $k$ rounds, or (2) $\agent$ can propose up to $k$ options, resulting in tight approximation ratios of $\Theta(k/n)$.

    In the Section~\ref{sec:max-min-agent-ratio}, we examine the effect of the discrepancy of utility for $\agent$ using the ratio $\alpha$ of maximum and minimum utility values. We obtain an $\Omega(\log\log \alpha/\log \alpha)$-approximation of the optimal (online) search for $\princ$, which is tight. The algorithm limits the acceptable options of $\princ$, partitions them into different bins, and then restricts $\agent$'s search space to the best possible bin for $\princ$. The challenge is to carefully design a profitable set of options that should be accepted by $\princ$ without giving $\agent$ an incentive to forgo proposing many of these options. Our algorithm shows that even if differences in utility of $\agent$ are polynomial in $n$, a good approximation for $\princ$ can be obtained.

    Additionally, we consider more challenging \emph{semi-oblivious} and \emph{oblivious} scenarios in which $\princ$ does not get to see the agent's utility of the proposed option. In the (fully) oblivious case, $\princ$ is even apriori unaware of the utility values for $\agent$ for all options (and thus remains so throughout). In the semi-oblivious case, $\princ$ knows the prior distributions fully, i.e., for every option the probability and the utility values for \emph{both} agents.

    Our Algorithms~\ref{algo:alpha-approx} and~\ref{algo:semiOblivious-approx} achieve $\Omega(1/\alpha)$ and $\Omega(1/(\sqrt{\alpha}\log \alpha))$-approximations for oblivious and semi-oblivious scenarios, respectively. This is contrasted with a set of instances for which any action scheme cannot extract more than an $O(1/\alpha)$- and $O(1/\sqrt{\alpha})$-approximation in the oblivious and semi-oblivious scenarios, respectively. These results highlight the effect of the hiding of $\agent$'s utilities from $\princ$ (in the proposal, or in the proposal and the prior) -- the achievable approximation ratios increase from logarithmic to polynomial ratios in $\alpha$.

    In Section~\ref{sec:principal-agent-ratio}, we consider the misalignment of agent and principal utilities via a parameter $\beta \ge 1$, which is the smallest value such that all utilities of $\princ$ and $\agent$ are related by a factor in $[1/\beta, \beta]$. Limited misalignment also leads to improved approximation results for $\princ$. We show an $\Omega(1/\log \beta)$-approximation of the optimal (online) search for $\princ$. Moreover, every algorithm must have a ratio in $O(\log \log \beta / \log \beta)$.  For the agent-oblivious variant, we obtain an $\Omega(1/\beta)$-approximation, whereas every algorithm must have a ratio in $O(1/\sqrt{\beta})$.



\bigskip
\subsection{Related Work}
Holmstrom \cite{Holmstrom77,Holmstrom84} initiated the study of delegation as a bilevel optimi\-za\-tion between an uninformed principal and a privately informed agent. The principal delegates the decision to the agent who himself has an interest in the choice of decision. Since the principal has the power to limit the search space, her optimization problem lies in striking the balance between restricting the space enough such that the second-level optimization by the agent doesn't hurt her too much and allowing a large enough set of potential decisions such that an acceptable decision can be found at all. Holmstrom identified sufficient conditions for a solution to the problem to exist. Subsequent papers \cite{MelumadS91,AlonsoM08} studied the impact of (mis-)alignment of the agent's and the principal's interests on the optimal delegation sets.

In another direction, the model was extended by allowing the principal to set costs for subsets instead of forbidding them \cite{AmadorB13,AmbrusE17}. These costs might be non-monetary, i.e., using different levels of bureaucracy for different subsets of options.

Armstrong and Vickers~\cite{ArmstrongV10} studied the delegation problem over discrete sets of random cardinality with elements drawn from some distribution. They identify sufficient conditions for the search problem to have an optimal solution. A similar model was considered by Kleinberg and Kleinberg~\cite{KleinbergK18}, where the option set searched by the agent consists of $n$ iid draws from a known distribution. Their results include constant-factor approximations of the optimal expected principal utility when performing the search herself rather than delegating it to the agent. For this, they employ tools from online stopping theory. The key difference between their work and our paper is that -- although using tools from online optimization -- they study an \emph{offline} problem while we focus on an \emph{online} version.

Bechtel and Dughmi~\cite{BechtelD21} recently extended this line of research by combining delegation with stochastic probing. Here a subset of elements can be observed by the agent (subject to some constraints), and several options can be chosen (subject to a different set of constraints). They provide constant-factor approximations for several downwards-closed constraint systems.

The study of persuasion, another model of strategic communication, has gained a lot of traction at the intersection between economics and computation in recent years. Here, the informed party (the ``sender'') is the one with commitment power, trying to influence the behavior of the uninformed agent (the ``receiver''). Closely related to our paper is the study of persuasion in the context of stopping problems~\cite{HahnHS20,HahnHS20IJCAI}. These works study persuasion problems in a prophet inequality~\cite{HahnHS20IJCAI} as well as in a secretary setting~\cite{HahnHS20}. 

Other notable algorithmic results on persuasion problems concern optimal policies, hardness, and approximation algorithms in the general case~\cite{DughmiX16} as well as in different variations, e.g., with multiple receivers~\cite{BabichenkoB17,BadanidiyuruBX18,DughmiX17,Rubinstein17,Xu20}, with limited communication complexity~\cite{DughmiKQ16,GradwohlHHS21}, or augmenting the communication through payments~\cite{DughmiNPW19}. A more elaborate communication model with evidence was studied recently in the framework of persuasion as well as delegation~\cite{HoeferMP21}.


\section{Impossibility}\label{sec:lb}

\subsection{A Tight Bound}

As a first simple observation, note that $\princ$ can always achieve an $n$-approximation with a deterministic action scheme, even in the agent-oblivious case. $\princ$ accepts exactly all options in a single round $i^*$ with optimal expected utility, i.e., $E_{i^*} = \{ \omega_{i^*,j} \mid j \in [s_{i^*}] \}$ for $i^* = \arg \max_{i \in [n]} \expec[\pv_{ij}]$, and $E_j = \emptyset$ otherwise. This motivates $\agent$ to always propose the option from round $i^*$, and $\princ$ gets expected utility $\expec[\pv_{i^*,j}]$. By a union bound, the optimal utility from searching through all options herself is upper bounded by $\expec\left[ \sum_i \pv_{ij} \right] \le n \cdot \expec[\pv_{i^*,j}]$.

\begin{proposition}
    For online delegation there is a deterministic action scheme $\varphi$ such that $\princ$ obtains at least a $1/n$-approximation of the expected utility for optimal (online) search.
\end{proposition}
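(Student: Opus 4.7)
The plan is to rigorize the argument sketched immediately before the proposition. First I would select a target round $i^* \in \arg\max_{i \in [n]} \expec[\pv_{ij}]$, where the expectation is over the draw from $\dist_i$, and define the deterministic action scheme by $E_{i^*} = \Omega_{i^*}$ and $E_i = \emptyset$ for every $i \neq i^*$. This is of the required form.

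The second step is to work out $\agent$'s best response under this scheme. In any round $i \neq i^*$, every proposal is rejected and yields $\agent$ a sure utility of $0$, so waiting weakly dominates proposing. In round $i^*$, by contrast, every proposal is accepted and pays $\agent$ the nonnegative realized value $\av_{i^*,j}$, while skipping round $i^*$ locks in utility $0$ forever. Hence $\agent$ proposes the option realized in round $i^*$; the degenerate case $\av_{i^*,j}=0$ is handled by the stated tie-breaking in favor of $\princ$. Consequently, $\princ$'s expected utility under this scheme is exactly $\expec[\pv_{i^*,j}]$.

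The third step is to upper bound the benchmark. The expected utility of $\princ$'s optimal online search is at most that of her optimal offline selection, and by a union bound
$$ \expec\!\left[\max_{i \in [n]} \pv_{ij}\right] \;\le\; \sum_{i=1}^n \expec[\pv_{ij}] \;\le\; n \cdot \expec[\pv_{i^*,j}], $$
where the last inequality uses the choice of $i^*$. Dividing $\princ$'s guaranteed utility by this upper bound yields the claimed $1/n$-approximation.

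The only subtle point, and the main (very mild) obstacle worth spelling out, is making the best-response argument airtight: one must check that $\agent$ never strictly prefers to skip round $i^*$, which could occur only if all $\av$-values in $\Omega_{i^*}$ are $0$, and the tie-breaking convention precisely handles this edge case. Everything else is a direct consequence of the scheme's structure.
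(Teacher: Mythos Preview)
Your proposal is correct and follows exactly the approach the paper gives just before the proposition: pick the single round with maximum expected principal value, accept everything there and nothing elsewhere, and bound the benchmark by $\sum_i \expec[\pv_{ij}] \le n \cdot \expec[\pv_{i^*,j}]$. Your extra care about $\agent$'s best response and the tie-breaking edge case only makes the argument cleaner.
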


We show a matching impossibility result, even in the IID setting with $\dist_i = \dist$ for all rounds $i \in [n]$, and when $\princ$ gets to see the full utility pair of any proposed option. There are instances in which $\princ$ suffers a deterioration in the order of $\Theta(n)$ over the expected utility achieved by searching through the options herself.

For the proof, consider the following class of instances. The distribution $\dist$ can be cast as an independent composition, i.e., we independently draw the utility values for $\princ$ and $\agent$. For $\princ$ there are two possibilities, either utility $1$ with probability $1/n$, or utility 0 with probability $1-1/n$. For $\agent$, there are $n$ possibilities with agent utility of $n^{4\ell}$, for $\ell = 1,\ldots,n$, where each one has probability $1/n$. In combination, we can view $\dist$ as a distribution over $j=1,\ldots,2n$ options. Options $\omega_j$ for $j=1,\ldots,n$ have probability $1/n^2$ and utilities $(\pv_{j}, \av_{j}) = (1,n^{4j})$, for $j =n+1,\ldots,2n$ they have probability $1/n-1/n^2$ and utilities $(\pv_{j}, \av_{j}) = (0,n^{4(j-n)})$.

\begin{theorem}\label{thm:generalLB}
	There is a class of instances of online delegation in the IID setting, in which every action scheme $\varphi$ obtains at most an $O(1/n)$-approximation of the expected utility for optimal (online) search.
\end{theorem}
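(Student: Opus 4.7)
The plan is to combine two complementary estimates for the principal's utility $U_\princ(\varphi)$, after first observing that the benchmark is $\Theta(1)$.

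For the benchmark, $\princ$ may play the online strategy of accepting any realized option with $\pv=1$; such an option appears independently with probability $1/n$ per round, so the benchmark is at least $1-(1-1/n)^n \ge 1-1/e$.

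For the upper bound, fix any $\varphi$ and let $V_i$ denote the agent's optimal-stopping value at round $i$ (computed by backward induction from $V_{n+1}=0$, as noted in the model). Let $x_j$ denote the probability that option $\omega_j$ is ultimately accepted; then $V_1 = \sum_j x_j \av_j$ and $U_\princ = \sum_{\ell=1}^{n} x_\ell$. The first estimate uses $V_1 \ge x_\ell \cdot n^{4\ell}$ for each $\ell \in [n]$; summing a geometric series gives $U_\princ \le V_1 \sum_{\ell \ge 1} n^{-4\ell} \le 2V_1/n^4$, which is sharp when $V_1$ is not too large.

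For large $V_1$, I would use agent rationality. By considering the strategy ``propose whenever the target option is drawn,'' the agent guarantees $V_{i+1} \ge c \cdot \frac{n-i}{n} \cdot \max(M_B,\, M_G/n)$ for some absolute constant $c > 0$, where $M_B = \max_\ell \varphi_{n+\ell} n^{4\ell}$ and $M_G = \max_\ell \varphi_\ell n^{4\ell}$; the extra factor $1/n$ attached to $M_G$ reflects the good options' draw probability $1/n^2$ versus $\approx 1/n$ for bad options. Combined with the condition $\varphi_j \av_j \ge V_{i+1}$ for $\omega_j$ to be proposable in round $i$, this sharply restricts the acceptance region.

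The main argument then decomposes $U_\princ$ per level $\ell$ into two regimes. In the symmetric regime---rounds $i$ where both $\omega_\ell$ and $\omega_{n+\ell}$ are proposable---the ratio of $\princ$'s gain to the agent's acceptance probability at level $\ell$ is $\pr_\ell/(\pr_\ell+\pr_{n+\ell}) = 1/(n-1)$; using the standard budget that the total expected number of proposals is at most $1$, the symmetric contribution across all levels is therefore at most $1/(n-1)$. In the asymmetric regime---only $\omega_\ell$ proposable---we have $V_{i+1} \in (\varphi_{n+\ell} n^{4\ell},\, \varphi_\ell n^{4\ell}]$, a narrow window that, combined with the geometric growth $n^{4\ell}$ and the monotonicity of $V_i$, limits the total contribution. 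The main technical obstacle is the extreme case $\varphi_{n+\ell} \equiv 0$, where every contribution is asymmetric. There, the lower bound on $V_{i+1}$ together with $\varphi_\ell \le 1$ forces the acceptance region in each pre-final round to contain effectively a single ``top'' level $\ell^*$ (contributing at most $1/n^2$ per round), while the final round contributes at most $1/n$ (since the total conditional acceptance rate is $\le 1/n$); summing yields $U_\princ = O(1/n)$.
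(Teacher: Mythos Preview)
Your approach via the agent's value function $V_i$ and a symmetric/asymmetric decomposition is a genuinely different angle from the paper's structural argument, but as written it has a real gap. First, a notational red flag: you write $\varphi_\ell$, $\varphi_{n+\ell}$, $M_G = \max_\ell \varphi_\ell n^{4\ell}$ as if the scheme were round-independent. It is not---$\varphi_{ij}$ depends on the round $i$, and this is exactly where the difficulty lies. If $M_G$ is meant to be a maximum over \emph{future} rounds $i'>i$ as well as levels, you must say so; otherwise your lower bound on $V_{i+1}$ is not well-defined.

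More substantively, the asymmetric case with $\varphi_{i,n+\ell}\equiv 0$ is the entire problem: an optimal $\princ$ always plays this way, so the symmetric-regime budget argument, though correct in spirit, handles nothing. Your resolution of the asymmetric case is the single sentence that the lower bound on $V_{i+1}$ ``forces the acceptance region in each pre-final round to contain effectively a single `top' level.'' This is not literally true (one can have two or more adjacent levels proposable in many rounds), and your stated lower bound $V_{i+1}\ge c\,\tfrac{n-i}{n}\cdot M_G/n$ is too weak to force it. What \emph{is} true is a staircase/Manhattan-path structure: if some level $\ell$ is accepted with probability $\ge 1/n$ in a future round, then $V_{i+1}\ge n^{4\ell-3}$, which kills all levels $<\ell$ in round $i$; iterating gives $O(n)$ proposable $(i,\ell)$ pairs and hence $O(1/n)$ by union bound. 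Extracting this cleanly requires either a rounding step---setting every $\varphi_{ij}\le 1/n$ to zero at a total cost of at most $1/n$, which is precisely what the paper does---or a much more careful inductive accounting than you have sketched. Your ``first estimate'' $U_\princ\le 2V_1/n^4$ does not help in the main regime, since $V_1$ can be of order $n^{4n-2}$.
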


\begin{proof}
For simplicity, we first show the result for schemes $\varphi$ with $\varphi_{ij} = 0$ for all rounds $i \in [n]$ and all $j = n+1, \ldots, 2n$. In the end of the proof we discuss why this can be assumed for an optimal scheme.

Since all options $j \in [n]$ have the same utility for $\princ$, she wants to accept one of them as soon as it appears. If she searches through the options herself, the probability that there is an option of value 1 is  $1-(1-1/n)^n \ge 1-1/e$. Her expected utility is a constant. In contrast, when delegating the search to $\agent$, the drastic utility increase motivates him to wait for the latest round in which a better option is still acceptable by $\princ$. As a result, $\agent$ waits too long, and removing acceptable options in later rounds does not remedy this problem for $\princ$.

More formally, interpret an optimal scheme $\varphi$ as an $n \times n$ matrix, for rounds $i\in [n]$ and options $j \in [n]$. We outline some adjustments that preserve the optimality of matrix $\varphi$.

Consider the set $S$ of all entries with $\varphi_{ij} \le 1/n$. For each $(i,j) \in S$, the probability that option $j$ is realized in round $i$ is $1/n^2$. When it gets proposed by $\agent$, then it is accepted by $\princ$ with probability at most $1/n$. By a union bound, the utility obtained from all these options is at most $1 \cdot |S| \cdot 1/n^2 \cdot 1/n \le 1/n$.

Suppose we change the scheme by decreasing $\varphi_{ij}$ to 0 for each $(i,j) \in S$. Then each entry in $\varphi$ is either 0 or at least $1/n$. If $\agent$ makes the same proposals as before, the change decreases the utility of $\princ$ by at most $1/n$. Then again, in the new scheme $\agent$ can have an incentive to propose other options in earlier rounds. Since all options with $\varphi_{ij} \neq 0$ have utility 1 for $\princ$, this only leads to an increase of utility for $\princ$. Moreover, in round 1 we increase all acceptance probabilities to $\varphi_{1j} = 1$ for $j \in [n]$. Then, upon arrival or such an option $\varphi_{1j}$, the change can incentivize $\agent$ to propose this option -- which is clearly optimal for $\princ$, since this is an optimal option for her. Since the change is in round 1, it introduces no incentive to wait for $\agent$. As such, it can only increase the utility for $\princ$.

Now consider any entry $\varphi_{ij} \ge 1/n$. We observe two properties:
\begin{enumerate}
	\item Suppose $\varphi_{i'j'} \ge 1/n$ for $i' < i$ and $j' < j$. Then $\princ$ accepts realization $\omega_{j'}$ in round $i'$ with positive probability, but she will also accept the better (for $\agent$) realization $\omega_{j}$ in a later round $i$ with positive probability. $\agent$ will not propose $\omega_{j'}$ in round $i'$ but wait for round $i$, since the expected utility in the later round $i$ is at least $n^{4j} \cdot 1/n^2 \cdot \varphi_{ij} \ge n^{4j-3} > n^{4(j-1)} \ge n^{4j'} \cdot \varphi_{i'j'}$, the utility in round $i'$. As such, we assume w.l.o.g.\ that $\varphi_{i'j'} = 0$ for all $i' < i$ and $j' < j$.
	\item Suppose $\varphi_{i'j} < \varphi_{ij}$ for $i' < i$. By property 1., all realizations $\omega_{j'}$ with $j' < j$ are not accepted in rounds $1,\ldots,i-1$. Hence, setting $\varphi_{i'j} = \varphi_{ij}$ does not change the incentives for $\agent$ w.r.t.\ other options, and thus only (weakly) increases the expected utility of $\princ$. By the same arguments, we set $\varphi_{ij'} = \max\{\varphi_{ij'}, \varphi_{ij}\}$ for all inferior options $j' < j$ in the same round $i$.
\end{enumerate}

We apply the previous two observations repeatedly, starting for the entries $\varphi_{in}$ in the $n$-th column for option $\omega_n$, then in column $n-1$, etc. By 1., every positive entry $\varphi_{ij} \ge 1/n$ leads to entries of 0 in all ``dominated'' entries $\varphi_{i'j'}$ with $i' < i$ and $j' < j$. As a consequence, the remaining positive entries form a ``Pareto curve'' in the matrix $\varphi$ or, more precisely, a Manhattan path starting at $\varphi_{1n}$, ending at $\varphi_{n1}$, where for each $\varphi_{ij} \ge 1/n$ the path continues either at $\varphi_{i+1,j} \ge 1/n$ or $\varphi_{i,j-1} \ge 1/n$.

We can upper bound the expected utility of $\princ$ by assuming that all $2n-1$ entries on the Manhattan path are 1 (i.e., $\varphi$ is deterministic) and $\agent$ proposes an acceptable option whenever possible. The probability that this happens is at most $(2n-1)/n^2 = O(1/n)$ by a union bound. This is an upper bound on the expected utility of $\princ$ and proves the theorem for schemes with $\varphi_{ij} = 0$ for all $i \in [n]$ and $j \ge n+1$.

Finally, suppose $\varphi_{ij} > 0$ for some $j \ge n+1$. Clearly, option $\omega_{j}$ adds no value to the expected utility of $\princ$. Moreover, the fact that $\omega_{j}$ has positive probability to be accepted in round $i$ can only motivate $\agent$ to refrain from proposing inferior options in earlier rounds. As such, setting $\varphi_{ij} = 0$ only (weakly) increases the utility of $\princ$.
\end{proof}

The lower bound in Theorem~\ref{thm:generalLB} remains robust also in several extensions of the model.

\subsection{Extensions}
\label{sec:extensions}
We discuss two generalizations for which a slight adaptation of the above lower bound is sufficient. First, we consider the case that $\agent$ has a lookahead. Second, we allow several proposals to be made by $\agent$.
\subsubsection{Agent with Lookahead}

Consider online delegation when $\agent$ has a lookahead of $k$ rounds. In round $i$, $\agent$ gets to see all realized options of rounds $i,i+1,\ldots,\min\{n,i+k\}$. For simplicity, our benchmark here is the expected value of $\princ$ for optimal (non-delegated) \emph{offline} search (i.e., online search with lookahead $n-1$). Note that the expected value for optimal online search is at least 1/2 of this~\cite{KrengelS77,KrengelS78}. Hence, asymptotically all benchmarks of expected utility for optimal offline or online search, with or without lookahead, are the same.

\begin{proposition}
    \label{prop:lookahead}
    For online delegation with lookahead $k$ there is an action scheme $\varphi$ such that $\princ$ obtains an $\Omega(k/n)$-approximation of the expected utility for optimal (offline) search.
\end{proposition}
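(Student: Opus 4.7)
The plan is to chop the time horizon into blocks whose size matches the lookahead and concentrate all acceptance probability on a single best block, exploiting the fact that within such a block $\agent$ effectively faces an offline delegation problem.

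First, partition $[n]$ into consecutive blocks $B_1,\ldots,B_m$ of size at most $k+1$, with $m=\lceil n/(k+1)\rceil$. For any block $B$, let $M_B=\expec[\max_{i\in B} \pv^*_i]$, where $\pv^*_i$ is the principal's utility from the realized option in round $i$ (i.e., $\pv^*_i = \pv_{i,j(i)}$ for the realized $\omega_{i,j(i)}$). Subadditivity of $\max$ gives $\sum_B M_B \ge \expec[\max_{i\in[n]}\pv^*_i]\ge \opt$, so some block $B^*$ satisfies $M_{B^*}\ge \frac{k+1}{n}\cdot\opt$ by averaging.

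Second, define the action scheme $\varphi$ by setting $\varphi_{ij}=0$ for every round $i\notin B^*$, and choosing $\varphi$ on $B^*$ according to a constant-factor \emph{offline} delegation scheme in the style of Kleinberg and Kleinberg~\cite{KleinbergK18}, instantiated on the product distribution $\prod_{i\in B^*}\dist_i$. Because every proposal $\agent$ makes in a round outside $B^*$ is rejected (yielding utility 0), and since ties are broken in favor of $\princ$, $\agent$ has no incentive to propose anything before $B^*$. On entering the first round of $B^*$, the lookahead of $k$ lets him observe the realizations of all rounds in $B^*$ simultaneously, so his optimal behavior coincides with that of the agent in the offline delegation instance on $B^*$. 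Applying the offline guarantee yields expected utility $\Omega(M_{B^*})=\Omega(k/n)\cdot\opt$ for $\princ$.

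The main obstacle is to confirm that the constant-factor offline delegation guarantee of~\cite{KleinbergK18} survives inside $B^*$, where the distributions $\dist_i$ may differ across rounds rather than being identical. This should go through without loss of the constant factor, since the underlying machinery is based on threshold/prophet-inequality arguments that work for independent but non-identical distributions. A secondary point to be careful about is the transition between blocks: we must check that $\agent$ really waits through the rounds before $B^*$ and then behaves as in the offline problem on $B^*$, rather than, say, proposing an option in $B^*$ prematurely before seeing all of $B^*$ -- but this follows from his optimality together with the fact that at round $\min B^*$ his lookahead already covers the whole block.
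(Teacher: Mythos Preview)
Your proposal is correct and follows essentially the same approach as the paper: partition the rounds into $\lceil n/(k+1)\rceil$ blocks of size at most $k+1$, pick the block with largest expected offline optimum, zero out $\varphi$ outside it, and invoke the constant-factor offline delegation guarantee of~\cite{KleinbergK18} on the chosen block (which the lookahead turns into an offline instance). One cosmetic remark: from averaging you get $M_{B^*}\ge \opt/\lceil n/(k+1)\rceil$, which is only $\ge \frac{k+1}{2n}\opt$ rather than $\frac{k+1}{n}\opt$, but this does not affect the $\Omega(k/n)$ conclusion.
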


Partition the $n$ rounds into $\lceil n/(k+1) \rceil $ parts with at most $k+1$ consecutive rounds each. Suppose we apply (non-delegated) offline search on each part individually. The expected value of offline search on the best of the $O(n/k)$ parts yields an $\Omega(k/n)$-approximation of the expected value of offline search on all $n$ rounds.

To obtain an $\Omega(k/n)$-approximation for the online delegated version, apply online search with $\agent$ and lookahead of $k$ to the best part of at most $k+1$ consecutive rounds. Due to the lookahead, this results in offline delegated search. In terms of utility for $\princ$, offline delegated search using prophet-inequality techniques~\cite{KleinbergK18} approximates optimal offline search by at least a factor of 1/2. Hence, applying the offline delegation algorithm of~\cite{KleinbergK18} on the best set of $k+1$ consecutive rounds yields an $\Omega(k/n)$-approximation.

Let us show that this guarantee is asymptotically optimal. The argument largely follows the proof of Theorem~\ref{thm:generalLB}. The class of instances is the same. We only explain which parts of the proof must be adapted.

\begin{corollary}\label{thm:lookaheadLB}
    There is a class of instances of online delegation with lookahead $k$ in the IID setting, in which every action scheme $\varphi$ obtains at most an $O(k/n)$-approximation of the expected utility for optimal (offline) search.
\end{corollary}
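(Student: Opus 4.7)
The plan is to reuse the same class of instances and the overall reduction from the proof of Theorem~\ref{thm:generalLB}, adapting only the ``Manhattan path'' step to the lookahead setting. As before, we may assume $\varphi_{ij}=0$ for $j\ge n+1$, since those options yield no principal utility and only provide an incentive to wait. Removing all entries with $\varphi_{ij}<1/n$ costs $\princ$ at most $O(1/n)$ in expected utility by a union bound (at most $n^2$ such entries, each contributing probability $1/n^2\cdot 1/n=1/n^3$). We therefore work with an $n\times n$ matrix $\varphi$ whose entries are either $0$ or at least $1/n$.

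The central modification is an \emph{adapted} property~1: if $\varphi_{ij}\ge 1/n$ and $\varphi_{i'j'}\ge 1/n$ with $i'<i$, $j'<j$, \emph{and} $i-i'>k$, then $\varphi_{i'j'}$ can be set to $0$ without loss. The argument is that $\agent$ in round $i'$ may follow the strategy ``discard everything until round $i$, and if $\omega_j$ is realized there, propose it''. Since $i>i'+k$ lies strictly beyond the lookahead window from round $i'$, the realization in round $i$ is independent of whatever $\agent$ observes in his lookahead, so this strategy has conditional expected continuation value at least $(1/n^2)\cdot n^{4j}\cdot\varphi_{ij}\ge n^{4j-3}$ for every lookahead realization. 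Proposing $\omega_{j'}$ now yields at most $n^{4j'}\le n^{4(j-1)}=n^{4j-4}$, so $\agent$ strictly prefers to wait and never proposes $\omega_{j'}$ in round $i'$; hence the value of $\varphi_{i'j'}$ is immaterial.

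Next, I would partition the $n$ rounds into $T=\lceil n/(k+1)\rceil$ blocks $B_s=[(s-1)(k+1)+1,\,s(k+1)]$. For any two non-adjacent blocks $B_s,B_t$ with $t\ge s+2$, every pair of rows $i'\in B_s$, $i\in B_t$ satisfies $i-i'\ge k+2>k$, so the adapted property forbids coexisting positive entries $(i',j')$ and $(i,j)$ with $j'<j$. Writing $m^*_s$ and $M^*_s$ for the minimum and maximum column index of a positive entry inside block $B_s$, this yields $m^*_s\ge M^*_t$ whenever $t\ge s+2$. Consequently, the active column intervals $[m^*_s,M^*_s]$ at blocks of fixed parity (odd $s$, resp.\ even $s$) are pairwise disjoint up to single-point overlaps, so $\sum_{s\text{ odd}}(M^*_s-m^*_s+1)\le n+T$ and likewise for even $s$. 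Since $B_s$ contains $k+1$ rows and each row in $B_s$ has at most $M^*_s-m^*_s+1$ positive entries, the total number of positive entries is $(k+1)\sum_s(M^*_s-m^*_s+1)=O(nk)$.

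A final union bound closes the argument: each positive entry $(i,j)$ contributes at most $\Pr[O_i=\omega_j]=1/n^2$ to $\princ$'s expected utility, giving $O(nk)/n^2=O(k/n)$ in total, while optimal offline search still achieves a constant. The main obstacle is getting the adapted property~1 right, namely recognising that the ``wait for a fixed future round $i>i'+k$'' strategy provides a \emph{lookahead-independent} lower bound on $\agent$'s continuation value. This is exactly where the lookahead allowance of $k$ degrades the Manhattan-path bound by a factor of $k+1$ (thickening the staircase into blocks of $k+1$ rows) but no more, matching the upper bound from Proposition~\ref{prop:lookahead}.
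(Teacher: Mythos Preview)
Your argument is correct and follows the paper's overall strategy: the same IID instance, the same reduction to entries that are either $0$ or $\ge 1/n$, and the same adapted property~1 (zeroing $\varphi_{i'j'}$ whenever a strictly better option is acceptable in some round $i>i'+k$). Your observation that the ``wait for round $i$'' strategy gives a continuation value independent of the lookahead realizations, since $i$ lies outside the window, is exactly the point that makes the adapted property go through.

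Where you diverge is in the final counting. The paper invokes property~2 as well, obtains a Manhattan staircase of $2n-1$ entries, and then argues that each column can carry at most $k$ additional positive entries beyond the staircase, giving $(k+2)n-1$ positive entries in total. You instead skip property~2 entirely and partition the rows into $\lceil n/(k+1)\rceil$ blocks of $k+1$ consecutive rounds; the adapted property~1 forces the active column ranges of non-adjacent blocks to be ordered, so that separating by parity gives two families of nearly disjoint intervals in $[1,n]$, whence $\sum_s(M^*_s-m^*_s+1)=O(n)$ and the number of positive entries is $O((k+1)n)$. Both routes land at $O(nk)$ entries and finish with the same union bound. Your block decomposition is arguably cleaner in that it avoids the somewhat implicit use of property~2 in the paper's corollary proof, at the price of a small parity trick; the paper's Manhattan-path picture, on the other hand, makes the ``thickened staircase'' geometry more explicit.
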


\begin{proof}
    Using similar observations as in the proof of Theorem~\ref{thm:generalLB} above, we can again (a) assume w.l.o.g.\ that $\varphi_{ij} = 0$ for all $j=n+1,\ldots,2n$, and (b) assume that $\varphi_{ij} = 0$ or $\varphi_{ij} \ge 1/n$, for all $i=1,\ldots,n$ and $j=1,\ldots,2n$, which deteriorates the expected utility for $\princ$ by at most $O(1/n)$.

    Consider the two properties in the proof of Theorem~\ref{thm:generalLB}. For property (1), we extend the idea to entries $\varphi_{i'j'} \ge 1/n$ with $j' < j$ and $i'+ k < i$. In particular, $\agent$ will decide to wait and not propose option $\omega_{j'}$ in round $i'$ if there is a round $i > i'+ k$ with a better option $\omega_j$ being acceptable (with probability at least $\varphi_{ij} \ge 1/n$). As such, we drop $\varphi_{i'j'}$ to 0 whenever such a constellation arises. Then, whenever an entry remains $\varphi_{i'j'} \ge 1/n$, this means that all entries for better options $j > j'$ in rounds $i = i'+k+1,\ldots,n$ must be $\varphi_{ij} = 0$.

    Now for a given option $\omega_{j'}$, consider round $i_{j'} = \arg\min \{i \mid \varphi_{ij'} \ge 1/n\}$. Then, for all better options with $j > j'$, property (1) requires that $\varphi_{ij} = 0$ for all $i \ge i_{j'} + k+1$. As such, for each option $\omega_j$, there can be at most $k$ positive entries ``beyond the Manhattan path'', i.e., $k$ rounds in which $\omega_j$ remains acceptable (with prob.\ at least $1/n$) after the first round when any lower-valued $\omega_{j'}$ becomes acceptable (with prob.\ at least $1/n$).

    Property (2) applies similarly as before. As such, we obtain a Manhattan path with $2n-1$ entries, and in addition there can be up to $nk$ entries with $\varphi_{ij} \ge 1/n$, i.e., a total of at most $(k+2)n - 1$ entries. We again upper bound the expected utility of $\princ$ by assuming that all these entries are 1 and $\agent$ proposes an acceptable option whenever possible. The probability that this happens is at most $((k+2)n-1)/n^2 = O(k/n)$ by a union bound, and this implies the upper bound on the expected utility of $\princ$.
\end{proof}

\subsubsection{Agent with $k$ Proposals}

Now consider the case when $\agent$ can propose up to $k$ options in $k$ different rounds. In this case, the definition of an action scheme becomes more complex -- rather than a single matrix, $\varphi$ turns into a decision tree. For each round $i$, consider the \emph{history} $H_i = (h_1,\ldots,h_{i-1})$. For every round $j = 1,\ldots,i-1$, the entry $h_j$ indicates whether or not there was a proposal by $\agent$ in round $j$, and if so, which option was proposed. Now given a round $i$ and a history $H_i$, an action scheme yields a value $\varphi_{ij}(H_i) \in \{0,1\}$ indicating whether or not $\princ$ accepts option $\omega_j$ when being proposed in round $i$ after history $H_i$. As before, $\princ$ commits to an action scheme anticipating the induced rational behavior of $\agent$. A simple backward induction shows that there is always an optimal proposal strategy for $\agent$ is deterministic. For simplicity, we also restrict attention to deterministic action schemes for $\princ$.

\begin{proposition}
    For online delegation with $k$ proposals there is a deterministic action scheme $\varphi$ such that $\princ$ obtains an $\Omega(k/n)$-approximation of the expected utility for optimal (offline) search.
\end{proposition}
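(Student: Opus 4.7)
The plan is to adapt the partition argument of Proposition~\ref{prop:lookahead} to the multi-proposal setting. I would partition the $n$ rounds into $m = \lceil n/k \rceil$ consecutive blocks $P_1, \ldots, P_m$ of at most $k$ rounds each. For each block $P_j$ let $V_j$ denote the expected value of (non-delegated) offline search restricted to $P_j$, i.e.\ the expected maximum principal-utility among the realizations in $P_j$, and let $\opt$ denote the analogous quantity taken over all $n$ rounds. Since the pointwise maximum over $[n]$ equals the maximum of the per-block maxima and is therefore bounded by their sum, taking expectations yields $\opt \le \sum_j V_j \le m \cdot \max_j V_j$, and hence $\max_j V_j \ge \opt/m = \Omega(k/n) \cdot \opt$. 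Fix a block $P_{j^\star}$ attaining this bound.

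Next, I would choose the (deterministic, history-independent) action scheme defined by $\varphi_{ij}(H_i) = 1$ for all rounds $i \in P_{j^\star}$, all option indices $j$, and all histories $H_i$, and $\varphi_{ij}(H_i) = 0$ for all $i \notin P_{j^\star}$. Under this scheme no proposal outside $P_{j^\star}$ is ever accepted, so $\agent$ can collect positive utility only through proposals made inside $P_{j^\star}$. Because $|P_{j^\star}| \le k$ and $\agent$ starts with a budget of $k$ proposals, a one-line induction shows that at the start of every round $i \in P_{j^\star}$ his remaining proposal budget is at least the number of $P_{j^\star}$-rounds still to come: proposing decreases both quantities by one, whereas skipping decreases only the latter. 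Consequently $\agent$ is never forced to hold a proposal back inside $P_{j^\star}$. Since $\av_{ij} \ge 0$, proposing is then weakly dominant, and ties (when $\av_{ij} = 0$) are broken in favor of $\princ$, so $\agent$ proposes the realized option in every round of $P_{j^\star}$.

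It follows that $\princ$ accepts every option realized in $P_{j^\star}$, so her accumulated utility dominates the maximum of those realizations, and her expected utility is therefore at least $V_{j^\star} = \Omega(k/n) \cdot \opt$, proving the proposition. The only step that requires genuine care is the budget/backward-induction argument in the middle paragraph, certifying that $\agent$ cannot gain by strategically withholding proposals within $P_{j^\star}$ in order to time them against the distributions of later rounds; once this is in place, the rest mirrors Proposition~\ref{prop:lookahead} almost verbatim (with the simplification that here we do not even need to invoke the offline delegation algorithm of \cite{KleinbergK18}, because $\princ$ already dominates the offline-optimal utility on $P_{j^\star}$ by accepting everything there).
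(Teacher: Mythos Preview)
Your argument rests on a misreading of the $k$-proposal model. In this extension $\agent$ may propose up to $k$ times, but $\princ$ still ultimately accepts at most \emph{one} option: acceptance terminates the process, while rejection allows the process (and further proposals) to continue. This is made explicit in the proof of Theorem~\ref{thm:kProposalLB}, where the expected utility of $\princ$ is decomposed over phases and it is noted that ``$\princ$ can accept an option in at most one of the phases''; it is also implicit in the benchmark, which is still the expected value of a \emph{single} best option. Under your scheme $\varphi \equiv 1$ on $P_{j^\star}$, the very first proposal inside $P_{j^\star}$ is accepted and the game ends. Hence $\agent$ does \emph{not} propose in every round of $P_{j^\star}$; rather, he faces a standard online stopping problem for his own utilities $\av_{ij}$ on those rounds, and will time his single effective proposal accordingly. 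Your weak-dominance claim (``proposing decreases both quantities by one'') is false in this model, and the conclusion that ``$\princ$ accepts every option realized in $P_{j^\star}$'' and thereby dominates the block maximum does not follow.

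The paper's construction repairs exactly this gap via a \emph{history-dependent} scheme: outside the chosen $k$-round window all $\varphi_{ij}(H_i)=0$; inside the window, $\varphi_{ij}(H_i)$ is set to $0$ whenever the history $H_i$ reveals that some earlier round of the window received no proposal, and otherwise is set according to an optimal (non-delegated) online search rule for $\princ$ on those $k$ rounds. The threat of blanket rejection after any skipped round is what forces $\agent$ to propose in every round of the window (since this is the only way either party obtains positive utility), and then $\princ$'s acceptance rule lets her run online search on the full stream of $k$ realizations. The prophet inequality then gives the factor-$2$ comparison to the block's offline optimum, and the block-selection step (which you carried out correctly) supplies the remaining $\Omega(k/n)$ factor.
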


The scheme is related to the approach in the previous section. Select the best interval $\ell, \ldots, \ell+k-1$ of $k$ consecutive rounds that maximize the expected value of offline search for $\princ$ over these rounds. We observed in the previous section that optimal offline search in these $k$ rounds yields an $\Omega(k/n)$-approximation of optimal offline search over $n$ rounds. We design an action scheme that incentivizes $\agent$ to propose exactly the $k$ options in rounds $\ell, \ldots, \ell+k-1$, thereby reducing the scenario to  (non-delegated) online search for $\princ$ over these rounds. Since the performance of online and offline search are related by a factor of 2, asymptotically we achieve the same performance as offline search over the $k$ rounds.

We set $\varphi_{ij}(H_i) = 0$ for rounds $i < \ell$ and all $j$ and $H_i$, as well as for rounds $i > \ell + k -1$ and all $j$ and $H_i$. For each round $\ell \le i \le \ell+k-1$, we set $\varphi_{ij}(H_i) = 0$ for all options $j$ whenever the history reveals that in at least one of the rounds $\ell,\ldots,i-1$ there was no proposal from $\agent$. Otherwise, if $H_i$ reveals that there was a proposal in each of these rounds, we set $\varphi_{ij}(H_i)$ as in an optimal online (non-delegated) search over rounds $\ell,\ldots,\ell+k-1$.

In this action scheme, $\princ$ immediately terminates the search whenever she did not receive a proposal from $\agent$ in one of the $k$ rounds, leaving both agents with a utility of 0. This creates an incentive for $\agent$ to submit a proposal in each of the $k$ rounds, since this is the only possibility to obtain a positive utility value.

Let us show that the approximation guarantee is asymptotically optimal. The argument uses and extends the result of Theorem~\ref{thm:generalLB}. The class of instances is the same.

\begin{theorem}\label{thm:kProposalLB}
    There is a class of instances of online delegation with $k$ proposals in the IID setting, in which every deterministic action scheme $\varphi$ obtains at most an $O(k/n)$-approximation of the expected utility for optimal (offline) search.
\end{theorem}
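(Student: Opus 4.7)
The plan is to closely follow the Manhattan-path counting argument from Theorem~\ref{thm:generalLB}, adapting it to the decision-tree structure induced by the $k$-fold proposals. First, exactly as in the $k=1$ case, I will argue that one can assume w.l.o.g.\ $\varphi_{ij}(H_i)=0$ for every history $H_i$ and every $j\ge n+1$: these options contribute nothing to $\princ$'s utility, and forcing their acceptance probability to $0$ only weakly helps $\princ$, since it can only discourage $\agent$ from proposing value-$1$ options. In particular, every proposal that $\princ$ accepts in the reduced scheme involves an option $j\in[n]$.

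Next, for each proposal slot $r\in[k]$ I will bound $\Pr[r\text{-th proposal is accepted}]$ by $O(1/n)$; summing over $r$ then gives the desired $O(k/n)$ bound because these events are disjoint (acceptance ends the game, so at most one proposal is ever accepted). To handle a fixed $r$, I condition on an arbitrary history $H$ containing $r-1$ (necessarily rejected) proposals and let $i_H$ be the first round after $H$. I introduce the ``silent extension'' of $H$ along which $\agent$ makes no further proposal, and define $C(H)$ to be the set of $(i,j)$ with $i\ge i_H$ and $j\in[n]$ such that $\varphi_{ij}(H_i^{\mathrm{sil}})=1$ and $\agent$ would indeed propose $\omega_j$ upon its realization in round $i$. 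Every accepted $r$-th proposal corresponds to some $(i,j)\in C(H)$ together with $\omega_j$ being realized in round $i$, so a crude union bound gives $\Pr[r\text{-th proposal accepted}\mid H]\le |C(H)|/n^2$.

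The key quantitative step is then $|C(H)|\le 2n-1$, which I plan to obtain by reusing Property~(1) from the proof of Theorem~\ref{thm:generalLB}. If $(i,j),(i',j')\in C(H)$ with $i<i'$ and $j<j'$, then at state $(i,H_i^{\mathrm{sil}})$ with $\omega_j$ realized, the strategy ``wait silently until round $i'$ and propose $\omega_{j'}$ if it is realized there'' yields $\agent$ expected utility at least $\av_{j'}/n^2=n^{4j'-2}$, strictly larger than $\av_j=n^{4j}$ since $j'\ge j+1$; hence $\agent$ would not propose $\omega_j$ in round $i$, a contradiction. The bound $|C(H)|\le 2n-1$ then follows exactly as in the single-proposal case, and marginalizing over $H$ yields $\Pr[r\text{-th proposal accepted}]\le O(1/n)$ and thus $\Pr[\princ\text{ accepts a value-}1\text{ option}]\le O(k/n)$, which gives the claimed approximation since the offline benchmark is still $\Theta(1)$.

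The point I expect to need the most care with is that the waiting-utility lower bound $\av_{j'}/n^2$ must remain valid inside the full multi-proposal game, where $\agent$ could conceivably insert rejected proposals between rounds $i$ and $i'$. I will argue that any such richer strategy is still available after silently declining $\omega_j$ in round $i$ (pure silence is always a feasible sub-strategy), so switching to ``propose $\omega_j$ now'' can only make $\agent$ weakly worse off, and the naive ``wait until $i'$'' strategy remains a valid lower bound on $\agent$'s waiting utility. This is exactly what lets Property~(1) transfer unchanged to every sub-scheme and closes the argument.
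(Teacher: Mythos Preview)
Your approach is correct and reaches the bound by a more direct route than the paper. The paper reduces each phase (the rounds between the $(r{-}1)$-th and $r$-th proposal) to an auxiliary single-proposal instance equipped with a \emph{reject bonus} that encodes $\agent$'s continuation value after a rejection, then argues by induction over rounds that removing the bonus and the zero-value options cannot hurt $\princ$, and finally invokes Theorem~\ref{thm:generalLB} as a black box. You instead condition directly on the history $H$ preceding the $r$-th proposal, observe that along the silent extension Property~(1) still holds because ``wait silently to round $i'$ and propose $\omega_{j'}$'' is always a feasible continuation for $\agent$, and conclude $|C(H)|\le 2n-1$ by the same antichain count. This bypasses the reject-bonus machinery entirely and is arguably cleaner; what the paper's reduction buys is modularity (a literal appeal to the single-proposal theorem rather than a re-run of its counting argument).

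One caveat: your first step, setting $\varphi_{ij}(H_i)=0$ for all $j\ge n+1$ ``exactly as in the $k=1$ case'', is not as immediate in the multi-proposal game. Zeroing out an acceptance alters $\agent$'s continuation values, which can change his proposal decisions in \emph{earlier} rounds in ways that are not obviously monotone for $\princ$; this is precisely why the paper resorts to the reject-bonus construction and an inductive argument rather than a one-line claim. (Also, your stated justification is phrased backwards: the point should be that accepting $j\ge n+1$ options can only \emph{discourage} $\agent$ from proposing value-$1$ options earlier, not the other way around.) Fortunately your main argument does not need this reduction at all: bound $\Pr[\text{$\princ$ accepts a value-$1$ option}]$ directly as $\sum_r \Pr[\text{$r$-th proposal accepted with }j\in[n]]$, and your Manhattan bound on $C(H)$---already restricted to $j\in[n]$---applies verbatim to the unmodified scheme $\varphi$. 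So the issue is cosmetic, but you should either drop the first step or flag that it is not a free import from the $k=1$ proof.
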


\begin{proof}
    We analyze the process by splitting the evolution of the process into at most $k$ non-overlapping \emph{phases}. Let $i_{\ell}$ be the (random) round in which $\agent$ makes the $\ell$-th proposal, for $\ell = 1,\ldots,k$. For completeness, we set $i_0 = 0$. Phase $\ell$ is the set of rounds $\{i_{\ell-1} + 1,\ldots, i_{\ell}\}$. $\princ$ can accept an option in at most one of the phases.
    Thus, by lineary of expectation, the expected utility of $\princ$ is upper bounded by the sum of expected utilities that $\princ$ obtains in each phase. In the rest of the proof, we will show that in each phase, the expected utility for $\princ$ is at most $O(1/n)$. Hence, the total expected utility of $\princ$ is $O(k/n)$, which proves the approximation guarantee.

    Towards this end, consider a single phase $\ell$. We condition on the \emph{full history} of the process before phase $\ell$, i.e., we fix all options drawn as well as decisions of $\princ$ and $\agent$ that have led to the $(\ell-1)$-th proposal in round $i_{\ell-1}$.
    We denote this full history by $H^f$. During phase $\ell$ (conditioned on $H^f$), we want to interpret the process as a single-proposal scenario analyzed in Theorem~\ref{thm:generalLB}. In particular, by fixing the history and the starting round of phase $\ell$, the histories $H_i$ within phase $\ell$ are also fixed. As such, during phase $\ell$, the scheme $\varphi$ can be seen as an action scheme for a single-proposal scenario with $n-i_{\ell-1}$ rounds.

    Now let us define an auxiliary single-proposal instance with $n$ rounds. In this instance, we assume $\princ$ sets $\varphi'_{ij} = 0$ for all options $j=1,\ldots,2n$ in the first $i=1,\ldots,i_{\ell-1}$ rounds and then in rounds $i = i_{\ell-1}+1,\ldots,n$ applies $\varphi'_{ij} = \varphi_{ij}(H_i)$ (where $H_i$ is composed of $H^f$ and no proposal in rounds $i_{\ell-1}+1,\ldots,i-1$).
    Then $\princ$ behaves in the auxiliary instance exactly as in phase $\ell$ of the $k$-proposal instance. In contrast, $\agent$ does not necessarily show the same behavior. In the auxiliary instance, $\agent$ gets utility 0 if the proposal is rejected.
    In phase $\ell$ of the $k$-proposal instance, however, proposing an option that gets rejected can be profitable for $\agent$. After rejection, phase $\ell + 1$ is reached and better expected utility for $\agent$ might be achievable in upcoming rounds (since the scheme could result in more favorable behavior of $\princ$ when the $\ell$-th reject decision happens in round $i$).

    In the auxiliary instance, we model this property by a \emph{reject bonus} for $\agent$ -- whenever a proposal is rejected in any round $i \ge i_{\ell-1}+1$,
    then ($\princ$ receives no utility and) $\agent$ receives the conditional expected utility from optimal play in the remaining rounds of the $k$-proposal instance, conditioned on $\varphi$ and history $H_{i+1}$ composed of $H_{i}$ and the rejected proposal in round $i$. It is straightforward to see that in the auxiliary instance with reject bonus, the interaction between $\princ$ and $\agent$ exactly proceeds as in phase $\ell$ of the $k$-proposal instance.

    Consider the auxiliary single-proposal instance with reject bonus for any given phase $\ell$.
    We prove that the expected utility for $\princ$ does not decrease when the following \textit{adjustments} are made: (1) the reject bonus is reduced to 0, and (2) we set $\varphi'_{ij} = 0$ for all rounds $i$ and options $j \ge n+1$ (i.e., the ones with $b_{j} = 0$). We prove the statement by induction over the rounds.

    Clearly w.l.o.g.\ there are no proposals in rounds $i \le i_{\ell-1}$. We can assume that (1) and (2) hold for all these rounds. Now consider round $i = i_{\ell-1} + 1$. When rejecting an option, or when accepting an option of value 0, the utility for $\princ$ is 0. For these options, adjustments (1) and (2) in round $i$ change the utility for $\agent$ to 0, as well. When facing such an option in round $i$, the adjustments incentivize $\agent$ to wait for potentially better subsequent options. It weakly increases the expected utility of $\princ$.

    Towards an induction, suppose the statement is true after the adjustments (1) and (2) in all rounds $i_{\ell-1} + 1 \le i' \le i$. Now consider round $i+1$. First, condition on the event that in both instances (with and without adjustments on round $i+1$), we reach round $i+1$. As argued above, the adjustments in round $i+1$ imply that $\agent$ has less incentive to propose an option of value 0 for $\princ$ in round $i+1$ and more incentive to wait for subsequent rounds. Hence, the utility for $\princ$ (conditioned on reaching round $i+1$) does not decrease.

    Note, however, that the probability of reaching round $i+1$ also changes by the adjustments. For every $i_{\ell-1} < i' < i+1$, removing the reject bonus and reducing the set of acceptable options in round $i+1$ lead to a reduction in expected utility for $\agent$ from the rounds \emph{after} round $i'$. This increases the probability that $\agent$ will propose an option in some round $i'$ \emph{before} $i+1$. It decreases the probability of reaching round $i+1$. Nevertheless, this is again good news for $\princ$: Since by hypothesis $\princ$ accepts only options of utility 1 and there is no reject bonus  in all rounds $i' \le i$, any proposal in these rounds is accepted and results in optimal utility for $\princ$. Overall, $\princ$ only profits from the adjustments (1) and (2) in round $i+1$. By induction, this holds when the adjustments are made in all rounds.

    After the adjustments, the auxiliary instance is a standard single-proposal instance studied in the context of Theorem~\ref{thm:generalLB}. This shows that the expected utility obtained by $\princ$ is in $O(1/n)$.

    As a consequence, the conditional expected utility for $\princ$ in phase $\ell$ (conditioned on each $H^f$) is upper bounded by $O(1/n)$. Hence, the overall expected utility from phase $\ell$ is at most $O(1/n)$. The expected utility from $k$ phases is upper bounded by $O(k/n)$. This proves the theorem.
\end{proof}

\section{Discrepancy of Agent Utilities}\label{sec:max-min-agent-ratio}

\subsection{Conscious Proposals}

The lower bound instances in Theorem~\ref{thm:generalLB} rely on an exponentially large ratio of agent utilities between 1 and $n^{O(n)}$. Is such a drastic discrepancy necessary to show a substantial lower bound? Can we obtain better approximation results for instances with a smaller ratio of the maximum and minimum utility values for $\agent$?

We first assume that $\av_{ij} > 0$ for all options (see Remark~\ref{rem:loglogWithZero} below how to extend our results to the case when $a_{ij} = 0$ is allowed). Let $\alpha = \max\{ \av_{ij} \mid i \in [n], j \in [s_i] \} / \min\{ \av_{ij} \mid i \in [n], j \in [s_i] \}$. W.l.o.g.\ we scale all utility values to $\av_{ij} \in [1,\alpha]$, where both boundaries $\alpha$ and $1$ are attained by at least one option. We say that the agent has \emph{$\alpha$-bounded utilities}.



\begin{algorithm}[t]\DontPrintSemicolon
	\caption{$\Omega(\log \log \alpha / \log \alpha)$-Approximation}\label{algo:log-alpha-loglog-alpha-approx}
	\KwIn{$n$ distributions $\dist_1, \dots, \dist_n$}
	\KwOut{Action Scheme $\varphi$}\vspace{0.3cm}
    Let $Q = $ RestrictOptions$(\dist_1, \dots, \dist_n,1/2)$.\;
    \lIf{$Q$ spans only a single round}{
        Set $B_1 = Q$.}
    \Else{
	Construct $c = \lceil \log_2 \alpha \rceil$ classes $\mathcal{C}_1, \dots, \mathcal{C}_c$ such that $\mathcal{C}_k = \{(i,j) \in Q \mid \av_{ij} \in [2^{k-1},2^k)\}$ for all $k = 1, \dots, c - 1$ and $\mathcal{C}_{c} =  \{ (i,j) \in Q \mid a_{ij} \in [2^{c-1},2^c]\}$. \label{algo-loglog-log:constructClasses}  \\
	Set $b=1$, and $s = c$. Open bin 1 and set $B_1 = \emptyset$. \\
	\For{$k = c$ \textnormal{down to} $1$}{\label{algo-loglog-log:startFor}
		\If{$2^{k-1} < 2^s \cdot \sum_{(i,j) \in B_b \cup \mathcal{C}_k} p_{ij}$ }{ \label{step10}
			set $b = b+1$ and $s=k$. \tcp*{$\sum_{(i,j) \in Q} p_{ij} < 1/2$, so no open bin stays empty}
			Open the new bin $b$ and set $B_b = \emptyset$}
		Add class $\mathcal{C}_k$ to bin $B_b =  B_b \cup \mathcal{C}_k$.}
	}\label{algo-loglog-log:endFor}
	$b^* = \arg \max_{b=1,\ldots} \sum_{(i,j) \in B_b} p_{ij} \pv_{ij}$, the bin with highest utility for $\princ$.\\
	Set $\varphi_{ij} = 1$ for all $(i,j) \in B_{b^*}$ and $\varphi_{ij} = 0$ otherwise.\\
	\textbf{return} $\varphi$
\end{algorithm}

We use Algorithm~\ref{algo:log-alpha-loglog-alpha-approx} to compute a good action scheme.
Intuitively, we partition the best options for $\princ$ that add up a total probability mass of roughly $1/2$ into $O(\log \alpha / \log \log \alpha)$ many bins. Each bin is constructed in a way such that $\agent$ is incentivized to propose the first option he encounters from that particular bin. The algorithm determines the best bin for $\princ$ and deterministically restricts the acceptable options to the ones from this bin.

The algorithm uses the subroutine RestrictOptions$(\dist_1,\dots,\dist_n,m)$ (Algorithm~\ref{algo:semiOblivious-restrict}), which considers all possible options in descending order of principal value until a total mass of $m$ is reached. These options are then collected in the set $Q = \{(i,j) \mid b_{ij} \ge b_{i'j'} \ \forall (i',j') \notin Q\}$ such that $\sum_{(i,j) \in Q} p_{ij} < m$. The options in the set $Q$ are compared with the set consisting solely of the first option to surpass the combined mass of $m$. Whichever gives the higher expected utility for principal is then returned by the subroutine as $Q$. This ensures that $\sum_{(i,j) \in Q \cup B_0} p_{ij} \cdot \pv_{ij} \ge m/2 \cdot \expec_{\omega_{ij} \sim \dist_i}[\max_{i \in [n]} \pv_{ij}]$. We formally prove this in Lemma~\ref{lem:restrictOptions} below.

\begin{algorithm}[t]\DontPrintSemicolon
    \caption{\label{algo:semiOblivious-restrict} RestrictOptions}
    \KwIn{$n$ distributions $\dist_1,\dots, \dist_n$, value $m$ restricting the mass}
    \KwOut{Set $Q$ of good options for $\princ$}
    Set $Q = \emptyset$, $p=p^*=0$, $U = \bigcup_{i=1}^n \bigcup_{j=1}^{s_i} \{ (i,j) \}$.\\
    \While{$p < m$}{\label{restrictOptions:startWhileQB}
        $U^* = \emptyset$, $p^* = 0$ \\
        \For{$k = 1, \dots, n$}{\label{restrictOptions:startForSamePrincValue}
            Let $U_k^* = \{ (k,j) \in U \mid \pv_{kj} \ge \pv_{i'j'}$ for all $(i',j') \in U \}$ be the options in round $k$ from the set of all remaining options with the best utility for $\princ$\\
            Set $p_k = \sum_{(i,j) \in U^*_k} p_{ij}$\\
            \lIf{\label{restrictOptions:startIfCombinedMass}$p^* + p_k < m$}{add $U^* = U^* \cup U_k^*$, update $p^* = p^* + p_k$}
            \uElse{ \lIf{$p_k > p^*$}{ set $U^* = U_k^*$}
                \textbf{break for-loop}}\label{restrictOptions:endIfCombinedMass}

        }\label{restrictOptions:endForSamePrincValue}
        Set $p^* = \sum_{(i,j) \in U^*} p_{ij}$ and $\pv^* = \pv_{ij}$ for $(i,j) \in U^*$\\
        \tcc*{Note that all options in $U^*$ have the same value for $\princ$.}
        \textbf{if} $p + p^* > m$ \textbf{then} set $B = U^*$; \textbf{else} add $Q = Q \cup U^*$ \\
        Remove $U = U \setminus U^*$, update $p = p + p^*$.\\
    }\label{restrictOptions:endWhileQB}
    Set $\pv_Q = \sum_{(i,j) \in Q} p_{ij}\pv_{ij}$ and $\pv_B = p^* \pv^*$ \label{restrictOptions:computeValues}\\
    \textbf{if} $\pv_Q < \pv_B$ \textbf{then} set $Q = B$ \label{restrictOptions:chooseBetterSet} \\
    \textbf{return} $Q$
\end{algorithm}

If the set $Q$ returned by RestrictOptions only spans a single round $i$, the agent will always be incentivized to propose an acceptable option in round $i$. For this scenario, the algorithm creates a single bin $B_1$. Otherwise, the set $Q$ is divided into $c = \lceil \log_2 \alpha \rceil$ classes depending on their agent utility in line~\ref{algo-loglog-log:constructClasses}.
The lowest and highest agent utilities in any given class differ by at most a factor of 2. More precisely, classes $\mathcal{C}_1, \dots, \mathcal{C}_c$ are constructed such that $\mathcal{C}_k = \{(i,j) \in Q \mid a_{ij} \in [2^{k-1},2^k)\}$ for $k=1,\dots,c-1$ and $\mathcal{C}_c = \{(i,j) \in Q \mid a_{ij} \in [2^{c-1},2^c]\}$.

These classes are then (lines~\ref{algo-loglog-log:startFor}-\ref{algo-loglog-log:endFor}) combined into bins such that (1) the bins are as big as possible and (2) $\agent$ optimizes his own expected utility by proposing the first option he encounters from any bin -- assuming that only options from this bin are allowed. Classes are either added to a bin completely or not at all. Let $s$ be the index of the class with the highest agent utilities currently considered, i.e., the first class to be added to the current bin $B_b$. We consider the classes by decreasing agent utility values, i.e., with indices $k=s,s-1, \dots$. While $2^{k-1} \ge 2^s \cdot \sum_{(i,j) \in B_b \cup \mathcal{C}_k} p_{ij}$, a rational $\agent$ will always propose the first option available from the current bin if that is the only allowed bin as it has a higher utility than what $\agent$ can expect from later rounds.

Before stating the main result of the section, we analyze the subroutine RestrictOptions.

\begin{lemma}\label{lem:restrictOptions}
    The subroutine RestrictOptions$(\dist_1,\dots,\dist_n,m)$ with distributions $\dist_1, \dots, \dist_n$ and a parameter $0 < m \le 1$ as input identifies $Q$, the best set of options for $\princ$, such that
    \[\sum_{(i,j) \in Q} p_{ij}\pv_{ij} \ge m/2 \cdot \expec_{\omega_{ij}\sim \dist_i}[\max_{i \in [n]} \pv_{ij}] \]
    and either (1) the combined mass $\sum_{(i,j) \in Q} p_{ij} < m$ or (2) all options in $Q$ arrive in the same round.
\end{lemma}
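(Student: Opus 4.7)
My plan is to reduce the lemma to two pieces: (a) a union-bound comparison between $\mathbb{E}[\max_i b_{ij}]$ and the value of the ``fractional top-$m$-mass'' bundle of options, and (b) a structural argument showing that whichever of $Q$ or $B$ the algorithm finally returns is worth at least half of that bundle.

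For (a), define $F(t) := \sum_{(i,j):\, b_{ij} \ge t} p_{ij}$ and $V^* := \int_0^\infty \min(m, F(t))\,dt$, the value of the fractional top-$m$-mass. Union-bounding over rounds gives $\mathbb{E}[\max_i b_{ij}] \le \int_0^\infty \min(1, F(t))\,dt$, and a short pointwise case check on whether $F(t) \ge 1$, $m \le F(t) < 1$, or $F(t) < m$ yields $\min(m, F(t)) \ge m \cdot \min(1, F(t))$ (using $m \le 1$), so $V^* \ge m \cdot \mathbb{E}[\max_i b_{ij}]$.

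For (b), let $b^*$ be the value with $\sum_{b_{ij} > b^*} p_{ij} < m \le \sum_{b_{ij} \ge b^*} p_{ij}$, and split $Q = Q_1 \cup Q_2$ into options of value strictly above and exactly at $b^*$. The key step I would prove is that \emph{every} option with $b_{ij} > b^*$ ends up in $Q_1$: when a while-iteration processes a value class $c > b^*$, the aggregate within-class mass is at most $\sum_{b_{ij} > b^*} p_{ij} < m$, so the inner for-loop's else-branch guard $p^* + p_k \ge m$ never fires and $U^*$ becomes the full class, while the accumulated $p$ has stayed $< m$, so the outer guard routes $U^*$ to $Q$ rather than $B$. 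This step sidesteps the otherwise delicate break/replace mechanic, which therefore only ever fires while processing value $b^*$. Since the while-loop terminates either with an overflow iteration that sets $B$ (hence $|Q| + |B| > m$) or by exactly reaching $|Q| = m$ with $B$ empty, I can then wrap up in both cases: in the latter, $v_Q = v_{Q_1} + (m - |Q_1|)\,b^* = V^*$; in the former, $Q_2$ and $B$ both consist entirely of options of value $b^*$, so
\[ v_Q + v_B - V^* \;=\; (|Q_2| + |B|)\, b^* - (m - |Q_1|)\, b^* \;=\; (|Q \cup B| - m)\, b^* \;\ge\; 0, \]
and hence $\max(v_Q, v_B) \ge (v_Q + v_B)/2 \ge V^*/2 \ge (m/2) \cdot \mathbb{E}[\max_i b_{ij}]$. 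Combined with the algorithm's ``return the better of $Q$ and $B$'' step, this yields the value bound.

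For the structural condition, I would simply case on the returned set. If it is the accumulated $Q$, its mass is at most $m$ by the $p + p^* \le m$ guard, giving (1). If it is $B$, then either $B$ was built by successive additions in the for-loop (each requiring $p^* + p_k < m$, so $|B| < m$ and (1) holds) or $B$ was produced by the replacement $U^* = U^*_k$ after a break (so $B$ consists only of options from the single round $k$ and (2) holds). The main difficulty I anticipate is the ``every option above $b^*$ is in $Q$'' claim, which demands careful tracking of the interaction between the inner for-loop's break/replace logic and the outer while-loop's overflow check; once that is in hand the rest is arithmetic.
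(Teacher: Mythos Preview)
Your proposal is correct and follows the same route as the paper: establish that $Q \cup B$ captures the top-$m$ mass of options (hence $v_Q + v_B \ge m \cdot \mathbb{E}[\max_i b_{ij}]$) and then return the better half. Your argument is considerably more rigorous than the paper's terse sketch---introducing the fractional top-mass quantity $V^*$ and making explicit the key structural claim that every option with value above $b^*$ lands in $Q$---but one minor slip (which the paper's proof shares) is that the guard $p + p^* \le m$ only yields $|Q| \le m$, not the strict inequality asserted in condition~(1).
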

\begin{proof}
    RestrictOptions first restricts the possible set of options to $Q \cup B$ consisting of the best options for $\princ$, with a union probability mass of at least $m$ in the while-loop in lines~\ref{restrictOptions:startWhileQB}-\ref{restrictOptions:endWhileQB}.
    Inside this while-loop, the options with the highest utility for $\princ$ are identified using the for-loop in lines~\ref{restrictOptions:startForSamePrincValue}-\ref{restrictOptions:endForSamePrincValue}. This loop ensures that no more than a combined mass of $m$ is considered for a set of options from different rounds with the same (currently highest) utility for $\princ$. Should such a set with a higher combined mass than $m$ exist, the if/else-statement in lines~\ref{restrictOptions:startIfCombinedMass}-\ref{restrictOptions:endIfCombinedMass} picks the better part of this set while ensuring that either the combined mass of the set is at most $m$ or only options from a single round are considered.

    Hence, by line~\ref{restrictOptions:endWhileQB} it holds that
    \[ 1/m \cdot \sum_{(i,j) \in Q \cup B} p_{ij}\pv_{ij} \ge  \expec_{\omega_{ij} \sim \dist_i}[\max_{i \in [n]} \pv_{ij}] \enspace.\]
    In line~\ref{restrictOptions:computeValues}, the utility for $\princ$ from the sets $Q$ and $B$ is calculated, in line~\ref{restrictOptions:chooseBetterSet}, the better one for $\princ$ is chosen. This means that at most another factor of 2 is lost, which means that in total, the set $Q$ which is returned by RestrictOptions guarantees
    \[  2/m \cdot \sum_{(i,j) \in Q} p_{ij}\pv_{ij} \ge \expec_{\omega_{ij} \sim \dist_i}[\max_{i \in [n]} \pv_{ij}] \enspace.\]
\end{proof}

Theorem~\ref{theo:loglog-log} is our main result in this section. We prove it below using Lemmas~\ref{lem:OneOverClasses} and~\ref{lem:numberOfBins}.

\begin{theorem}\label{theo:loglog-log}
    If the agent has $\alpha$-bounded utilities, there is a deterministic action scheme such that $\princ$ obtains an $\Omega(\log \log \alpha/\log \alpha)$-approximation of the expected utility for optimal (online) search.
\end{theorem}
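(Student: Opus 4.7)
The plan is to verify two properties of the bin decomposition of Algorithm~\ref{algo:log-alpha-loglog-alpha-approx} and combine them with Lemma~\ref{lem:restrictOptions}. Invoking that lemma with $m = 1/2$ yields $\sum_{(i,j)\in Q} p_{ij}\pv_{ij} \ge \tfrac{1}{4}\,\expec_{\omega_{ij}\sim\dist_i}[\max_i \pv_{ij}]$, and the right-hand side upper bounds the expected utility of the optimal online search for $\princ$. So it suffices to show that the algorithm extracts an $\Omega(\log\log\alpha/\log\alpha)$-fraction of $\sum_{(i,j)\in Q} p_{ij}\pv_{ij}$. If $Q$ lives in a single round, then $\agent$ proposes an acceptable option whenever one arrives, and a constant-factor approximation is immediate; the interesting case is the binning.

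The first property to prove (to be isolated as Lemma~\ref{lem:OneOverClasses}) is that if $\princ$ accepts exactly the options of a single bin $B_b$, then $\agent$ proposes the first such option he encounters. I would compare $\agent$'s immediate payoff, at least $2^{k_b-1}$ where $k_b$ is the lowest class index in $B_b$, with the payoff of any continuation strategy, at most $2^{s_b}\cdot M_b$, where $s_b$ is the highest class index in $B_b$ and $M_b = \sum_{(i,j)\in B_b} p_{ij}$ is the bin's total mass. The invariant $2^{k_b-1}\ge 2^{s_b}\cdot M_b$ maintained by line~\ref{step10} is exactly what makes continuation unprofitable, and the tie-breaking convention handles equality.

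The second property (to be isolated as Lemma~\ref{lem:numberOfBins}) is the bound $b^{\max} = O(\log\alpha/\log\log\alpha)$ on the number of bins. Let $r_b$ denote the number of classes in $B_b$. The bin condition gives $M_b\le 2^{-r_b}$, while the condition that failed at the moment $B_b$ was closed gives $M_b + \mathrm{mass}(\mathcal{C}_{k_b-1}) > 2^{-r_b-1}$. Since $\mathcal{C}_{k_b-1}\subseteq B_{b+1}$, summing and using $\sum_b M_b < 1/2$ yield $\sum_b 2^{-r_b} = O(1)$. Combined with $\sum_b r_b = \lceil\log_2\alpha\rceil$ and the convexity of $x\mapsto 2^{-x}$, this gives $b^{\max}\cdot 2^{-\log_2\alpha/b^{\max}} = O(1)$, hence $b^{\max} = O(\log\alpha/\log\log\alpha)$. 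I expect this counting argument to be the main obstacle: the ``growth witness'' $\mathrm{mass}(\mathcal{C}_{k_b-1})$ at each closing event lives inside the next bin, so one must be careful not to double-count.

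Finally, a pigeonhole argument picks the bin $B_{b^*}$ carrying an $\Omega(1/b^{\max})$ fraction of $\sum_{(i,j)\in Q} p_{ij}\pv_{ij}$. When $\princ$ accepts only $B_{b^*}$, $\agent$ proposes the first $B_{b^*}$-option he sees; since $M_{b^*}\le 1/2$, the probability that no such option has arrived before any round $i$ is at least $1-M_{b^*}\ge 1/2$. Hence $\princ$'s expected utility is at least $\tfrac12\sum_{(i,j)\in B_{b^*}} p_{ij}\pv_{ij}$, which chains with the other constant factors and Lemma~\ref{lem:restrictOptions} to give the claimed $\Omega(\log\log\alpha/\log\alpha)$-approximation.
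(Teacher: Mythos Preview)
Your proposal is correct and follows essentially the same route as the paper: reduce to $Q$ via Lemma~\ref{lem:restrictOptions}, prove the incentive property for a single bin (your first property is exactly Lemma~\ref{lem:OneOverClasses}), bound the number of bins (your second property is Lemma~\ref{lem:numberOfBins}), and combine by pigeonhole together with the $\ge 1/2$ survival probability. The only stylistic difference is in the bin-count: the paper lower-bounds the number of classes in a bin by $\log_2(1/p_B)$ and then balances the masses, whereas you upper-bound $\sum_b 2^{-r_b}$ via the closing condition and apply Jensen---these are the same convexity argument viewed from opposite sides. One small point to make explicit in your write-up: the closing condition is unavailable for the \emph{last} bin, so your sum $\sum_b 2^{-r_b}=O(1)$ really ranges over the first $\ell-1$ bins, which of course still yields $\ell=O(\log\alpha/\log\log\alpha)$.
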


\begin{lemma}
	\label{lem:OneOverClasses}
	Let $\ell$ be the number of bins opened in Algorithm~\ref{algo:log-alpha-loglog-alpha-approx}. Then the scheme computed by the algorithm obtains at least an $1/(8\ell)$-approximation of the expected utility of the best option for $\princ$ in hindsight.
\end{lemma}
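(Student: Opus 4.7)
The plan is to split into the trivial single-round case (then $\ell=1$, any arriving option in $B_1 = Q$ is proposed since there is no alternative, and Lemma~\ref{lem:restrictOptions} with $m=1/2$ directly gives a $1/4$-approximation) and the main case where $Q$ spans multiple rounds. In the main case, the key structural fact is that $\sum_{(i,j)\in Q} p_{ij} < 1/2$, because RestrictOptions terminates the while-loop as soon as this mass is exceeded and the only piece possibly pushing beyond $1/2$ is the set $B$ in line~\ref{restrictOptions:chooseBetterSet}, which is either not selected or consists of options in one round only. The overall strategy is then: (i) show that under the scheme $\varphi$, $\agent$ proposes the very first option from $B_{b^*}$ he encounters; (ii) bound from below the probability that $(i,j)\in B_{b^*}$ is this first arrival; (iii) use that $B_{b^*}$ is the best of $\ell$ bins that partition $Q$; (iv) invoke Lemma~\ref{lem:restrictOptions}.

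For step (i), fix $B_{b^*}$ and let $s$ be the starting class index (highest agent utility) and $k_{\min}$ the lowest class index in $B_{b^*}$. Because the algorithm did \emph{not} open a new bin when $\mathcal{C}_{k_{\min}}$ was added, the condition of line~\ref{step10} fails for the final $B_{b^*}$, i.e., $2^{k_{\min}-1} \ge 2^s \cdot \sum_{(i,j) \in B_{b^*}} p_{ij}$. Any $(i,j)\in B_{b^*}$ belongs to some class $\mathcal{C}_k$ with $k\ge k_{\min}$, so $\av_{ij}\ge 2^{k-1}\ge 2^{k_{\min}-1}$; on the other hand every $(i',j')\in B_{b^*}$ lies in some class $\mathcal{C}_{k'}$ with $k'\le s$, hence $\av_{i'j'}\le 2^s$. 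Because $\princ$ accepts only options from $B_{b^*}$, $\agent$'s expected continuation payoff after round $i$ is at most $\sum_{(i',j')\in B_{b^*},\,i'>i} p_{i'j'}\av_{i'j'} \le 2^s \sum_{(i',j')\in B_{b^*}} p_{i'j'} \le 2^{k_{\min}-1} \le \av_{ij}$. With tie-breaking in favor of $\princ$, $\agent$ proposes $(i,j)$ immediately.

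For steps (ii)--(iv), the probability that $(i,j)\in B_{b^*}$ is the first arrival in $B_{b^*}$ equals $p_{ij}\prod_{i'<i}\bigl(1-\sum_{j':(i',j')\in B_{b^*}} p_{i'j'}\bigr) \ge p_{ij}\bigl(1-\sum_{(i',j')\in B_{b^*}} p_{i'j'}\bigr) \ge p_{ij}/2$, using $\sum_{B_{b^*}} p \le \sum_{Q} p < 1/2$. Hence $\princ$'s expected utility is at least $\tfrac{1}{2}\sum_{(i,j)\in B_{b^*}} p_{ij}\pv_{ij}$. Since the bins partition $Q$ and $b^*$ is chosen to maximize $\sum_{B_b} p_{ij}\pv_{ij}$, this is at least $\tfrac{1}{2\ell}\sum_{(i,j)\in Q} p_{ij}\pv_{ij}$, and Lemma~\ref{lem:restrictOptions} with $m=1/2$ lower-bounds the latter by $\tfrac{1}{4}\,\expec_{\omega_{ij}\sim\dist_i}[\max_i \pv_{ij}]$. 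Chaining the three factors yields the claimed $1/(8\ell)$-approximation of the optimal-in-hindsight benchmark.

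The main obstacle I expect is pinning down the incentive argument in step (i) cleanly, in particular the upper bound on $\agent$'s continuation payoff: one must argue that \emph{any} stopping rule $\agent$ could use in the remaining rounds is dominated by the sum $\sum_{i'>i} p_{i'j'}\av_{i'j'}$, which follows because at most one subsequent option is ever accepted, so the payoff of any strategy is majorized term-by-term by the unconditional sum. Everything else (the bin-mass inequality, the factor $1/2$ from $\sum_{Q}p<1/2$, the pigeonhole over $\ell$ bins, and the appeal to Lemma~\ref{lem:restrictOptions}) is bookkeeping.
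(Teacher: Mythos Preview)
Your proposal is correct and follows essentially the same approach as the paper's own proof: invoke Lemma~\ref{lem:restrictOptions} for the factor $1/4$, pigeonhole over the $\ell$ bins for the factor $1/\ell$, use the bin-construction inequality $2^{k_{\min}-1}\ge 2^s\sum_{B_{b^*}}p_{ij}$ to show $\agent$ proposes the first acceptable option, and lose a final factor $1/2$ from the reaching probability. The only cosmetic differences are that you handle the single-round case separately and derive the $1/2$ reaching-probability bound via $\prod(1-m_{i'})\ge 1-\sum m_{i'}$ rather than the paper's minimization argument; one tiny imprecision is that your phrase ``did not open a new bin when $\mathcal{C}_{k_{\min}}$ was added'' is literally false when $B_{b^*}$ consists of a single class and $b^*>1$, but the needed inequality $2^{s-1}\ge 2^s\sum_{B_{b^*}}p_{ij}$ still holds there simply because $\sum_{Q}p_{ij}<1/2$.
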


\begin{proof}
    By Lemma~\ref{lem:restrictOptions}, we know that $Q$ satisfies
    \[
   		4 \cdot \sum_{(i,j) \in Q } p_{ij} \pv_{ij} \ge \expec_{\omega_{ij} \sim \dist_i} [\max_{i \in [n]} \pv_{ij} ] = \opt\enspace.
    \]
   	Now consider the construction of the bins. 
    Suppose 
    we split $Q$ into $\ell$ bins $B_1, B_2, \ldots, B_{\ell}$. In the end, we choose the best one $B_{b^*}$ among the $\ell$ bins $B_1,\ldots,B_{\ell}$, so
    \[
   	\sum_{(i,j) \in B_{b^*}} p_{ij} \pv_{ij} \ge \frac{1}{\ell} \sum_{(i,j) \in Q} p_{ij} \pv_{ij} \ge  \frac{1}{4\ell} \cdot \opt\enspace.
   	\]

   	The action scheme restricts attention to $B_{b^*}$ and accepts each proposed option $\omega_{ij}$ from the bin with probability 1. Let $k^- = \min \{ k \mid \mathcal{C}_k \subseteq B_{b^*}\}$ be the class of smallest index in $B_{b^*}$, and $k^+$ the one with largest index, respectively. Now suppose the agent learns in round $i$ that an option $\omega_{ij}$ with $(i,j) \in B_{b^*}$ arrives in this round. We claim that $\agent$ will then decide to propose this option. This is obvious if all options in $B_{b^*}$ are only realized in round $i$. Otherwise, the agent might want to wait for an option in a later round.
    If $\agent$ proposes, then his utility is $\av_{ij}$. Otherwise, if he waits for another option from $B_{b^*}$ in a later round, then a union bound shows that the expected utility is at most
   	\begin{align*}
    	\sum_{(i',j') \in B_{b^*}, i' > i} p_{i'j'} \cdot \av_{i'j'} \le \sum_{(i',j') \in B_{b^*}, i' > i} p_{i'j'} 2^{k^+} < 2^{k^+} \cdot \sum_{(i',j') \in B_{b^*}} p_{i'j'} \le  2^{k^- - 1}  \le \av_{ij} \enspace,
   	\end{align*}
   	where the second-to-last inequality is a consequence from the construction of the bin. Hence, the first option from the bin that is realized also gets proposed by $\agent$ and accepted by $\princ$.

   	Now for each option $(i,j) \in B_{b^*}$, the probability that this option is proposed and accepted is the combination of two
 	independent events: (1) no other option from $B_{b^*}$ was realized in any of the rounds $i' < i$, (2) option $\omega_{ij}$ is realized in round $i$. The probability for event (2) is $p_{ij}$. For event (1), we define  $m_i = \sum_{(i,j) \in B_{b^*}} p_{ij}$. With probability $\prod_{i' < i} (1-m_i')$, no option from $B_{b^*}$ is realized in rounds $i' < i$. Note that $\sum_{i=1}^n m_i \le 1/2$. The term $\prod_{i=1}^n (1-m_i)$ is minimized for $m_1 = 1/2$ and $m_{i'} = 0$ for $1 < i' < i$. Thus $\prod_{i=1}^n (1-m_i) \ge 1/2$, i.e., the probability of event (1) is at least 1/2.

   	Overall, by linearity of expectation, the expected utility of $\princ$ when using $\varphi$ is at least
   	\[
   		\sum_{(i,j) \in B_{b^*}} \frac 12 \cdot p_{ij} \cdot \pv_{ij} \ge \frac{1}{2} \sum_{(i,j) \in B_{b^*}} p_{ij} \pv_{ij} \ge \frac{1}{8\ell} \cdot \opt
   	\]
   	and this proves the lemma.
\end{proof}

\begin{lemma}\label{lem:numberOfBins}
	Let $\ell$ be the number of bins opened in Algorithm~\ref{algo:log-alpha-loglog-alpha-approx}. It holds that $\ell = O(\log \alpha / \log \log \alpha)$.
\end{lemma}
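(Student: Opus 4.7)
The plan is to derive a mass lower bound on each closed bin from the trigger condition of Algorithm~\ref{algo:log-alpha-loglog-alpha-approx}, sum these lower bounds, and combine with the total mass constraint $\sum_{b} p(B_b) < 1/2$ inherited from RestrictOptions (via Lemma~\ref{lem:restrictOptions} applied with $m=1/2$). Then Jensen's inequality on the resulting sum, together with the fact that the class widths $T_b$ must sum to $c = \lceil \log_2 \alpha \rceil$, gives the target bound.

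In detail, write $s_b$ and $e_b$ for the highest and lowest class indices in bin $B_b$, and let $T_b = s_b - e_b + 1$ be the number of classes it contains. For each $b = 1, \ldots, \ell - 1$, bin $B_{b+1}$ opens precisely when the algorithm processes class $k = e_b - 1$ and finds that the trigger condition on line~\ref{step10} is met, i.e., $2^{k-1} < 2^{s_b} \cdot (p(B_b) + p(\mathcal{C}_k))$. Rearranging yields
\[
p(B_b) + p(\mathcal{C}_{e_b - 1}) \;>\; 2^{e_b - s_b - 2} \;=\; 2^{-(T_b + 1)}.
\]
Since $\mathcal{C}_{e_b - 1}$ is the first class added to $B_{b+1}$, we have $p(\mathcal{C}_{e_b - 1}) \le p(B_{b+1})$, so
\[
p(B_b) + p(B_{b+1}) \;>\; 2^{-(T_b + 1)} \quad \text{for } b = 1, \ldots, \ell - 1.
\]

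Summing over $b = 1, \ldots, \ell - 1$, the left-hand side is at most $2 \sum_{b=1}^{\ell} p(B_b) = 2\,p(Q) < 1$, using that RestrictOptions returns $Q$ with $p(Q) < 1/2$ (the single-round branch is excluded since we are in the case of multiple bins). Therefore
\[
\sum_{b=1}^{\ell-1} 2^{-T_b} \;<\; 2.
\]
Because $f(x) = 2^{-x}$ is convex, Jensen's inequality gives
\[
\sum_{b=1}^{\ell-1} 2^{-T_b} \;\ge\; (\ell - 1) \cdot 2^{-\bar{T}}, \qquad \bar{T} = \frac{1}{\ell - 1}\sum_{b=1}^{\ell - 1} T_b \;\le\; \frac{c}{\ell - 1},
\]
so $(\ell-1) \cdot 2^{-c/(\ell - 1)} < 2$. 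Rearranging, $(\ell - 1)\log_2((\ell-1)/2) < c$, which forces $\ell = O(c/\log c) = O(\log \alpha / \log \log \alpha)$.

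The only non-routine step is step one: carefully reading off the trigger condition to relate the current bin's mass to $2^{-T_b}$ and pushing the leftover class mass into the next bin. Everything after that is a convexity argument on the $T_b$'s subject to $\sum T_b \le c$, which is standard.
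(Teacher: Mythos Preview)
Your proof is correct and follows essentially the same route as the paper: extract from the trigger condition a relation between a bin's class count $T_b$ and its mass, combine with the total-mass bound $p(Q)<1/2$, and finish with a convexity argument. The paper phrases the last step as a ``balancing'' (derivative) argument on the masses $p_{B_i}$, while you phrase it as Jensen on the $T_b$'s; these are the same inequality read in two directions. Your treatment of the boundary class $\mathcal{C}_{e_b-1}$ (pushing its mass into $p(B_{b+1})$) is in fact a bit more careful than the paper's, which tacitly folds that class into ``$p_B$'' without comment.
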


\begin{proof}
    Consider some bin $B$ and the mass $p_B = \sum_{(i,j) \in B} p_{ij}$. We want to argue that at most $O(c/\log c)$ bins are opened. To do so, we first condition on having $\ell$ open bins and strive to lower bound the number of classes in these $\ell$ bins.

    Consider a bin $B$ starting at $\mathcal{C}_s$. The algorithm adds classes to $B$ until $2^{k-1} < 2^s p_{B}$. Thus, $s-k+1 > \log_2(1/p_{B})$, i.e., the number of classes in $B_i$ is lower bounded by $\log_2(1/p_B)$.

    Now consider two bins $B_i$ and $B_{i+1}$ and condition on $q = p_{B_i} + p_{B_{i+1}}$. Together the bins contain at least $\log_2(1/p_{B_i}) + \log_2(1/(q-p_{B_i}))$ classes. Taking the derivative for $p_{B_i}$, we see that this lower bound is minimized when $p_{B_i} = q/2 = p_{B_{i+1}}$. Applying this balancing step repeatedly, the lower bound on the number of classes in all bins becomes smallest when $p_{B_i} = p_{B_j}$ for all bins $B_i, B_j$. Thus, when opening $\ell$ bins, we obtain the smallest lower bound on the number of classes in these bins by setting $p_{B_i} = \frac{1}{\ell} \cdot \sum_{(i,j) \in Q} p_{ij} < 1/(2\ell)$ for all bins $B_i$. Conversely, when opening $\ell$ bins, we need to have at least $\ell \log_2 (2\ell)$ classes in these bins.

    Now, since we need to put $c$ classes into the bins, we need to ensure that for the number $\ell$ of open bins we have
    $\ell (\log_2 \ell + 1) \le c$, since otherwise the $\ell$ bins would require more than $c$ classes in total. This implies that $c = \Omega( \ell \log_2 \ell )$ and, hence, $\ell = O(c / \log c) = O(\log \alpha / \log \log \alpha)$.
\end{proof}

Observe that the approximation ratio of this algorithm is tight in general. Consider the instances in Theorem~\ref{thm:generalLB} with $\alpha = n^{O(n)}$. The theorem shows that every scheme can obtain at most a ratio of $O(1/n) = O(\log \log \alpha / \log \alpha)$. 


\begin{remark}\rm\label{rem:loglogWithZero}
If there are options with utility 0 for $\agent$, the maximum ratio between the lowest and highest utility for $\agent$ becomes unbounded. Still, if the maximum ratio between the lowest and highest \emph{non-zero} utility for $\agent$ is bounded by $\alpha$, an $\Omega(\log \log \alpha / \log \alpha)$-approximation can be achieved with a slight modification of Algorithm~\ref{algo:log-alpha-loglog-alpha-approx}. Suppose there are any options with $\av_{ij} = 0$ in $Q$, then construct another bin $B_{-1}$ which consists of all options with 0 utility for $\agent$ in the set $Q$. If $B_{-1}$ is the bin that is chosen as the best bin in the algorithm, the agent will not receive any utility and, due to tie-breaking in favor of $\princ$, can be assumed to execute an online search for $\princ$. Using standard prophet inequality results, this yields a $1/2$-approximation for $\princ$ within this bin. If bin $B_{-1}$ is not the best bin, the analysis from the theorem can be applied.
\end{remark}

\begin{corollary}
	\label{cor:loglogWithZero}
	If the agent has $\alpha$-bounded utilities for all options with strictly positive utility,
	 there is a deterministic action scheme such that $\princ$ obtains an $\Omega(\log \log \alpha/\log \alpha)$-approximation of the expected utility for optimal (online) search.
\end{corollary}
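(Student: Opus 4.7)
The plan is to adapt Algorithm~\ref{algo:log-alpha-loglog-alpha-approx} by introducing one additional bin $B_{-1} = \{(i,j) \in Q \mid \av_{ij} = 0\}$ and running the original for-loop only on the sub-collection $Q^+ = Q \setminus B_{-1}$ of options with strictly positive agent utility. By hypothesis, the options in $Q^+$ have $\alpha$-bounded agent utilities, so the classes $\mathcal{C}_1,\ldots,\mathcal{C}_c$ with $c = \lceil \log_2 \alpha\rceil$ are well-defined on $Q^+$ and Lemma~\ref{lem:numberOfBins} still bounds the number of bins produced by the loop by $\ell = O(\log\alpha/\log\log\alpha)$. Thus the modified algorithm picks $B_{b^*} = \arg\max_b \sum_{(i,j)\in B_b} p_{ij}\pv_{ij}$ among at most $\ell + 1 = O(\log\alpha/\log\log\alpha)$ candidate bins, and commits to the deterministic scheme that accepts exactly its options. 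From Lemma~\ref{lem:restrictOptions} we still have $\sum_{(i,j)\in Q} p_{ij}\pv_{ij} \ge \opt/4$, so averaging over $\ell+1$ bins yields $\sum_{(i,j)\in B_{b^*}} p_{ij}\pv_{ij} \ge \opt / (4(\ell+1))$.

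What remains is to show that the expected utility actually realised by $\princ$ under this scheme is a constant fraction of the bin mass $\sum_{(i,j)\in B_{b^*}} p_{ij}\pv_{ij}$, which I would handle by a case distinction on whether $B_{b^*}$ is a positive bin or $B_{-1}$. If $B_{b^*} \in \{B_1,\ldots,B_\ell\}$, the argument of Lemma~\ref{lem:OneOverClasses} transfers verbatim: the bin-construction inequality $2^{k-1} \ge 2^s\sum_{(i,j)\in B_{b^*}} p_{ij}$ still forces $\agent$ to propose the first bin option he encounters, and the bound $\sum_{(i,j)\in Q} p_{ij} < 1/2$ ensures every bin option is ``first'' with probability at least $1/2$, giving expected utility at least $\tfrac12\sum_{(i,j)\in B_{b^*}} p_{ij}\pv_{ij}$.

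If instead $B_{b^*} = B_{-1}$, then every acceptable option has $\av_{ij}=0$, so $\agent$'s continuation value is identically $0$ at every history in the induced subgame and he is exactly indifferent among all propose/wait decisions. By tie-breaking in favour of $\princ$, we may then assume $\agent$ proposes every arriving $B_{-1}$ option, so $\princ$ accepts the first one realised. The same ``first arriving option'' calculation as in Lemma~\ref{lem:OneOverClasses} --- again using $\sum_{(i,j)\in B_{-1}} p_{ij} \le \sum_{(i,j)\in Q} p_{ij} < 1/2$ --- then yields expected utility at least $\tfrac12\sum_{(i,j)\in B_{-1}} p_{ij}\pv_{ij}$. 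Combining the two cases gives expected utility at least $\tfrac{1}{8(\ell+1)}\opt = \Omega(\log\log\alpha/\log\alpha)\cdot \opt$, as claimed.

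The only step I expect to be slightly delicate is the tie-breaking claim in the $B_{-1}$ case, i.e., that an $\agent$ with identically-zero continuation value can be assumed to mimic $\princ$'s preferred proposal pattern. This follows by a routine backward induction on rounds restricted to the $B_{-1}$ subgame, using that indifference is exact at every history; it mirrors how the paper has already been invoking tie-breaking in favour of $\princ$ in the basic model.
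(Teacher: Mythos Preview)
Your proposal is correct and follows essentially the same approach as the paper's argument in Remark~\ref{rem:loglogWithZero}: introduce an extra bin $B_{-1}$ for the zero-utility options, run the original binning on $Q^+$, and case-split on whether the best bin is $B_{-1}$ or one of the positive bins. The only minor difference is in the handling of the $B_{-1}$ case: the paper has $\agent$ (via tie-breaking) execute $\princ$'s optimal online search on $B_{-1}$ and then invokes the prophet inequality, whereas you lower-bound by the ``propose the first arriving option'' strategy and reuse the Lemma~\ref{lem:OneOverClasses} calculation. Both routes yield the required constant-factor bound on $\sum_{(i,j)\in B_{-1}} p_{ij}\pv_{ij}$; your version is slightly more elementary, while the paper's phrasing is cleaner about what tie-breaking actually forces $\agent$ to do (namely, play optimally for $\princ$, not merely propose first---though the latter suffices as a lower bound).
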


\subsection{Oblivious Proposals}

In the previous section, we considered algorithms for $\princ$ when she learns the utility pair for the proposed option. In this section, we show that (fully) oblivious proposals can be a substantial drawback for $\princ$. Obviously, the lower bound in Theorem~\ref{thm:generalLB} remains intact even for oblivious proposals, when $\princ$ does not learn the utility value of the proposed option for $\agent$. For oblivious proposals and $\alpha$-bounded agent utilities, we can significantly strengthen the lower bound. In contrast to the logarithmic approximation guarantee above, we provide a linear lower bound in $\alpha$ for oblivious proposals.

\begin{theorem}
	\label{thm:fullyObliviousLB}
	There is a class of instances of online delegation with $\alpha$-bounded utilities for the agent and oblivious proposals, in which every action scheme $\varphi$ obtains at most an $O(1/\alpha)$-approximation of the expected utility for optimal (online) search.
\end{theorem}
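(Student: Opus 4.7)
The plan is to construct a family of instances parameterized by $\alpha$ in which the obliviousness of $\princ$ forces a multiplicative loss of $\alpha$, as opposed to the $\log \alpha / \log\log\alpha$ loss permitted in the conscious setting of Theorem~\ref{theo:loglog-log}. The key structural observation is that in the fully oblivious scenario, the scheme $\varphi$ must be a function of the pair $(i,\pv)$ only: if two options in the same round $i$ share the same $\pv$-value but carry different $\av$-values, $\princ$ has no way to distinguish them from the proposal (she only sees $\pv_{ij}$) and no prior information about $\av$ to exploit either. Thus the effective ``resolution'' of any scheme is the support of $(i,\pv)$ rather than $(i,j)$.

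Building on this, I would design each round to contain a ``good'' option (with a $\pv$-value that contributes to $\opt$ and a small $\av$) paired with a ``trap'' option (same $\pv$-value but $\av$ of order $\alpha$). Since any scheme accepts both or neither, allowing $\princ$ to collect the good option requires simultaneously making the trap reachable. Probabilities and the number of rounds are tuned so that (i)~$\opt = \Omega(1)$ driven by the frequently-occurring good options which an offline optimum can harvest round by round, while (ii)~the presence of the trap poisons the agent's continuation value: a backward-induction computation of $V_i$ analogous to the one in the proof of Theorem~\ref{thm:generalLB} shows that $V_i$ grows to $\Theta(\varphi\cdot\alpha)$, so the agent ceases to propose any option whose $\av$ is substantially below $\alpha$. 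Consequently $\princ$ is paid only on the rare trap realizations, which happen with probability $O(1/\alpha)$ per round.

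The analysis itself proceeds by case-splitting on an arbitrary candidate scheme $\varphi$. If $\varphi$ assigns non-negligible probability to the shared $\pv$-value across sufficiently many rounds, the backward-induction bound on $V_i$ implies the agent waits through the good options and only proposes when a trap realizes, directly yielding $\princ$ at most $O(\opt/\alpha)$. If $\varphi$ is concentrated on few rounds or is small everywhere, a union bound over realization probabilities already limits the utility to $O(\opt/\alpha)$. We may WLOG take $\varphi(i,0)=0$, since rejecting $\pv=0$ options only helps, and a standard convexity argument lets us restrict to deterministic $\varphi$ up to constant factors.

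The main obstacle is the simultaneous maintenance of two tensions uniformly over every scheme: $\opt$ must remain $\Omega(1)$, but $\princ$'s extraction must be $O(1/\alpha)$ no matter how $\varphi$ trades off acceptance across rounds and $\pv$-values. This requires a delicate choice of probabilities and number of rounds so that the trap dominates $\agent$'s continuation calculus in every configuration of $\varphi$, while still being too rare to help $\princ$ directly. A secondary subtlety is to ensure that all $\av$-values remain in $[1,\alpha]$ (so $\alpha$-boundedness is genuinely respected), and that the good options' $\av$ is large enough to be nontrivial without dampening the waiting incentive created by the trap.
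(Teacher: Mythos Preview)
Your plan has a genuine gap that prevents it from reaching the $O(1/\alpha)$ bound. You are constructing a \emph{fixed} instance (the good/trap pair with shared $\pv$-value lives in the distribution from the outset) and then case-splitting over schemes $\varphi$. But the indistinguishability you invoke---that $\varphi$ can depend only on $(i,\pv)$---holds equally in the semi-oblivious setting, and on any fixed instance the set of implementable schemes is the same in both scenarios. Consequently your construction is also a semi-oblivious instance, and Algorithm~\ref{algo:semiOblivious-approx} (Theorem~\ref{thm:semiObliviousAlgo}) applies to it, guaranteeing $\princ$ an $\Omega(1/(\sqrt{\alpha}\log\alpha))$-fraction of $\opt$. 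So no fixed instance of the type you describe can force a loss worse than $\tilde{\Theta}(1/\sqrt{\alpha})$. Indeed, if you push your own backward-induction sketch through, you will find that the trap probability $q$ must satisfy $q \gtrsim 1/(n\alpha)$ to make the agent skip the good options, but then the agent still proposes good options in the final $\Theta(1/(q\alpha))$ rounds; optimizing the resulting trade-off $nq + 1/(nq\alpha)$ lands you at $\Theta(1/\sqrt{\alpha})$, which is exactly the semi-oblivious barrier of Theorem~\ref{thm:semiObliviousLB}, not the claimed $O(1/\alpha)$.

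The feature of full obliviousness that you are not exploiting is that $\princ$ does not know the $\av$-values \emph{a priori}, so the adversary may choose them \emph{after} seeing $\varphi$. The paper's construction uses only a single valuable option per round, $(\pv,\av)=(1,x_i)$ with $x_i\in\{1,\alpha\}$, and then sets the threshold $i^*$ (before which $x_i=1$ and after which $x_i=\alpha$) adaptively as the largest index with $\sum_{i\ge i^*}\varphi_{i2}\ge e\,n/\alpha$. This adaptive placement ensures that either the total acceptance mass is already $O(1/\alpha)$, or the high-$\av$ tail is just large enough to make $\agent$ skip every round $i<i^*$, after which a union bound over the tail again gives $O(1/\alpha)$. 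Without this adversarial choice of the $\av$-profile your approach cannot beat the semi-oblivious rate.
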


\begin{proof}
	Consider the following class of instances. In $\dist_i$, there are two options with the following probabilities and utilities: $\omega_{i1}$ with $p_{i1} = 1-1/n$ and $(\pv_{i1},\av_{i1}) = (0,1)$, as well as $\omega_{i2}$ with $p_{i2} = 1/n$ and $(\pv_{i2},\av_{i2}) = (1,x_i)$, where $x_i \in \{1,\alpha\}$ and $\alpha \in [1,n]$. In the first rounds $i=1,\ldots,i^*-1$ we have $x_i = 1$, then $x_i = \alpha$ for rounds $i=i^*,\ldots,n$. The expected utility when $\princ$ performs (undelegated) online search is $1 - (1-1/n)^n \ge 1-1/e$.

	Clearly, $\princ$ has an incentive that $\agent$ proposes any profitable option $\omega_{i2}$ as soon as possible. As in the proof of Theorem~\ref{thm:generalLB}, we can assume that all $\varphi_{i1} = 0$ in an optimal scheme -- this option yields no value for $\princ$ and could only raise the incentive to wait for $\agent$.

	Due to oblivious proposals, $\princ$ has to choose $\varphi$ without being aware of the value of $i^*$. For our impossibility result, we adjust $i^*$ to the scheme $\varphi$ chosen by $\princ$: Set $i^* \in \{1,\ldots,n\}$ to the largest number such that $\sum_{i=i^*}^n \varphi_{i2} \ge e \cdot n/\alpha$, or $i^*=1$ if no such number exists.

	First, suppose that $i^* = 1$. Then, even if we make $\agent$ propose \emph{every} option $\omega_{i2}$ as soon as it arises, a union bound implies that the expected utility of $\princ$ is upper bounded by $\sum_{i=1}^n \frac{1}{n} \cdot \varphi_{i2} \le \frac{e}{\alpha} + \frac{1}{n}$. Hence, $\princ$ obtains only an $O(1/\alpha)$-approximation, for any $\alpha \in [1,n]$.

	Now suppose that $i^* > 1$.	Consider an optimal scheme $\varphi$ for $\princ$. If $\omega_{i2}$ arises in round $i$, $\agent$ decides if it is more profitable to propose $i$ or wait for a later round. Indeed, we show that $\agent$ never proposes $\omega_{i2}$ in any round $i < i^*$. Consider the expected utility from proposing the first option $\omega_{k2}$ arising in rounds $k = i^*,\ldots,n$. This is
	\begin{align*}
		\alpha \cdot\left( \sum_{k=i^*}^n \frac{1}{n} \left(1-\frac{1}{n}\right)^{k-i^*} \varphi_{k2} \right)& = \alpha \cdot \frac{1}{n} \cdot \sum_{k=i^*}^n\left(1-\frac{1}{n}\right)^{k-i^*}\varphi_{k2} \\
		&> \frac{\alpha}{n} \cdot \frac{1}{e} \sum_{k=i^*}^n \varphi_{k2} \ge \frac{\alpha}{en} \cdot \frac{en}{\alpha} = 1 \ge \varphi_{i2}\enspace,
	\end{align*}
	i.e., strictly larger than the expected utility $\varphi_{i2}$ from proposing $\omega_{i2}$ in round $i < i^*$. Hence, $\agent$ only proposes in rounds $k = i^*,\ldots,n$. Now even if $\agent$ would be able to propose \emph{every} option $\omega_{k2}$ in rounds $k=i^*,\ldots,n$, a union bound implies that the expected utility of $\princ$ from these rounds is upper bounded by $\sum_{k=i^*}^n \frac{1}{n} \cdot \varphi_{k2} \le \frac{e}{\alpha} + \frac{1}{n}$. For any $\alpha \in [1,n]$, this implies $\princ$ obtains an $O(1/\alpha)$-approximation.
\end{proof}

\begin{algorithm}[t]\DontPrintSemicolon
    \caption{$\Omega(1/\alpha)$-Approximation for Oblivious Proposals}\label{algo:alpha-approx}
    \KwIn{$n$ distributions $\dist_1, \dots, \dist_n$}
    \KwOut{Action Scheme $\varphi$}
    Let $Q = $ RestrictOptions$(\dist_1,\dots,\dist_n,1/(2\alpha))$. \\
    Set $\varphi_{ij} = 1$ for all $(i,j) \in Q$. \\
    \Return $\varphi$
\end{algorithm}
\begin{theorem}
    \label{thm:fullyObliviousAlgo}
    If the agent has $\alpha$-bounded utilities and makes oblivious proposals, there is a deterministic action scheme such that $\princ$ obtains an $\Omega(1/\alpha)$-approximation of the expected utility for optimal (online) search.
\end{theorem}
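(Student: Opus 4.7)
The plan is to combine Lemma~\ref{lem:restrictOptions} with the observation that $\alpha$-bounded agent utilities and a sufficiently small probability mass in the acceptable set $Q$ force the agent to propose any option of $Q$ as soon as it arises. Concretely, invoke RestrictOptions with parameter $m = 1/(2\alpha)$. By Lemma~\ref{lem:restrictOptions}, this yields a set $Q$ with $\sum_{(i,j)\in Q} p_{ij}\pv_{ij} \ge \tfrac{1}{4\alpha}\cdot \opt$, and either (a) all options in $Q$ lie in a single round, or (b) the total mass $M := \sum_{(i,j)\in Q} p_{ij} < 1/(2\alpha)$. Since RestrictOptions always adds options in groups sharing the same principal value in a given round, the deterministic scheme $\varphi$ is well-defined in the oblivious setting: $\princ$ decides solely based on the round and the announced principal value.

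In case (a), the agent has no rounds to wait for, so whenever an option in $Q$ realizes, proposing yields strictly positive utility and the agent proposes it; $\princ$ collects the full $\sum_{(i,j)\in Q} p_{ij}\pv_{ij} \ge \tfrac{1}{4\alpha}\cdot \opt$. Case (b) is the interesting one. Recall $\av_{ij} \in [1,\alpha]$. Suppose the agent sees $(i,j) \in Q$ in round $i$. Proposing now gives utility $\av_{ij} \ge 1$. Waiting gives at most
\[
\sum_{(i',j')\in Q,\; i'>i} p_{i'j'}\,\av_{i'j'} \;\le\; \alpha \sum_{(i',j')\in Q,\; i'>i} p_{i'j'} \;\le\; \alpha\cdot M \;<\; \tfrac{1}{2},
\]
so the agent strictly prefers to propose immediately. (Options outside $Q$ are rejected with probability $1$ and therefore never get proposed.)

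Given this incentive structure, for every $(i,j)\in Q$ the option is proposed (and accepted) precisely when it realizes in round $i$ and no option of $Q$ realized in rounds $1,\dots,i-1$. Writing $m_{i'} = \sum_{(i',j)\in Q} p_{i'j}$, the probability of the latter event is $\prod_{i'<i}(1-m_{i'}) \ge 1 - \sum_{i'<i} m_{i'} \ge 1 - M \ge 1/2$. Hence the expected utility of $\princ$ is at least
\[
\tfrac{1}{2}\sum_{(i,j)\in Q} p_{ij}\pv_{ij} \;\ge\; \tfrac{1}{8\alpha}\cdot \opt,
\]
proving the claimed $\Omega(1/\alpha)$-approximation.

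The only delicate step is verifying that the agent indeed proposes as soon as a $Q$-option arises, i.e., choosing $m=1/(2\alpha)$ so that the bound $\alpha\cdot M$ on the continuation utility drops strictly below $\min_{(i,j)} \av_{ij}=1$. The subtlety is matching the two roles of $\alpha$: it is both the worst-case gap in agent utilities (which controls how tempting later rounds are) and the parameter in the RestrictOptions call (which bounds the mass of $Q$). Any constant-factor improvement of one forces a matching change in the other, which is precisely why the tight $O(1/\alpha)$ upper bound in Theorem~\ref{thm:fullyObliviousLB} complements this argument.
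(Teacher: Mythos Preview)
Your proof is correct and follows essentially the same approach as the paper: call RestrictOptions with $m=1/(2\alpha)$, use Lemma~\ref{lem:restrictOptions} to bound the value of $Q$, argue that $\alpha$-boundedness together with the mass bound forces $\agent$ to propose the first $Q$-option, and lower-bound the reach probability by $1/2$ to get the $1/(8\alpha)$ ratio. Your added remark that RestrictOptions groups options by $(\text{round},\pv)$-pairs, so that $\varphi$ is implementable from oblivious proposals, is a point the paper leaves implicit.
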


\begin{proof}
    The proof follows along the lines of Lemma~\ref{lem:OneOverClasses} above. By Lemma~\ref{lem:restrictOptions}, we have $4 \alpha \cdot \sum_{(i,j) \in Q} p_{ij} \pv_{ij} \ge \opt$,
    the expected value of the best option in hindsight.

    The action scheme accepts each proposed option $\omega_{ij}$ from the set $Q$ with probability 1. Note that $Q$ satisfies either that $\sum_{(i,j) \in Q} p_{ij} < 1/(2\alpha)$ or all options in $Q$ arrive in the same round $i$.

    In the latter case, $\agent$ will propose any option $\omega_{ij}$ with $(i,j) \in Q$ he encounters in round $i$. In a later round $i' > i$, $\princ$ will not accept any option.

    Hence, let us consider the former case that $Q$ satisfies $\sum_{(i,j) \in Q} p_{ij} < 1/(2\alpha)$. Suppose the agent learns in round $i$ that an option $\omega_{ij}$ with $(i,j) \in Q$ arrives. We claim that $\agent$ will propose this option. If $\agent$ proposes, then the expected utility is $\av_{ij}$. Otherwise, if he waits for another option from $Q$ in a later round, then a union bound shows that the expected utility is at most
    \begin{align*}
        &\sum_{(i',j') \in Q,\ i' > i} p_{i'j'} \cdot \av_{i'j'} \le \sum_{(i',j') \in Q,\ i' > i} p_{i'j'} \cdot \alpha \cdot \av_{ij} \le \av_{ij}\enspace,
    \end{align*}
    where the first inequality is due to $\alpha$-bounded utilities, and the second inequality follows since $\sum_{(i,j) \in Q} p_{ij} \le 1/(2\alpha)$ by construction. Hence, the first option from $Q$ that is realized also gets proposed by $\agent$ and accepted by $\princ$.

    Now, for each option $(i,j) \in Q$, the probability that this option is proposed and accepted is the combination of two independent events: (1) no other option from $Q$ was realized in any of the rounds $i' < i$, (2) option $\omega_{ij}$ is realized in round $i$. The probability for event (2) is $p_{ij}$. For probability for event (1) we define $m_i = \sum_{(i,j) \in Q} p_{ij}$. With probability $\prod_{i' < i} (1-m_i)$ no option from $Q$ is realized in rounds $i' < i$. Note that $\sum_{i=1}^n m_i \le 1/(2\alpha)$. The term $\prod_{i=1}^n (1-m_i)$ is minimized for $m_1 = 1/(2\alpha)$ and $m_{i'} = 0$ for $1 < i' < i$. Thus $\prod_{i=1}^n (1-m_i) \ge 1- 1/(2\alpha)$, i.e., the probability of event (1) is at least $1-1/(2\alpha) \ge 1/2$.

    By linearity of expectation, the expected utility of $\princ$ when using $\varphi$ based on $Q$ is at least
    \[
    \sum_{(i,j) \in Q} \frac{1}{2} \cdot p_{ij} \cdot \pv_{ij} \ge \frac{1}{8\alpha} \opt \enspace.
    \]
\end{proof}

In contrast to Corollary~\ref{cor:loglogWithZero}, the result of Theorem~\ref{thm:fullyObliviousAlgo} does \emph{not} generalize to the case when $\agent$ has options with utility 0, and $\alpha$ is the ratio of maximum and minimum non-zero utility. Even in the semi-oblivious scenario (discussed in the next section), all algorithms must have a ratio in $O(1/n)$, even when all utilities for $\agent$ are $a_{ij} \in \{0,1\}$.

\subsection{Semi-Oblivious Proposals}

In this section, we analyze semi-oblivious proposals, where $\princ$ has full apriori information about the prior, but she does not learn the utility value of $\agent$ upon a proposal. The additional information about the prior can indeed help to improve the approximation ratio from $\Theta(1/\alpha)$ to $\Omega(1/(\sqrt{\alpha} \log \alpha))$, but not to a logarithmic bound as shown for conscious proposals in Theorem~\ref{theo:loglog-log}. In particular, we start by showing the following limit on the approximation ratio.

\begin{theorem}\label{thm:semiObliviousLB}
	There is a class of instances of online delegation with IID options, $\alpha$-bounded utilities for the agent, and semi-oblivious proposals, in which every action scheme $\varphi$ obtains at most an $O(1/\sqrt{\alpha})$-approximation of the expected utility for optimal (online) search.
\end{theorem}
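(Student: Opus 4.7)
The plan is to exhibit a single family of IID instances parameterized by $\alpha$ and show directly that no semi-oblivious action scheme can exceed an $O(1/\sqrt{\alpha})$ fraction of the optimal online search value. I would take $n = \lceil\sqrt{\alpha}\rceil$ identical rounds whose distribution $\dist$ places mass $p := 1/\sqrt{\alpha}$ on $\omega_A = (\pv,\av) = (1,1)$, mass $q := 1/\alpha$ on $\omega_B = (1,\alpha)$, and the rest on a filler $\omega_C = (0,1)$. All agent utilities lie in $[1,\alpha]$, so the instance is $\alpha$-bounded. The decisive design choice is that $\omega_A$ and $\omega_B$ share the same principal value $\pv=1$, so under semi-oblivious proposals $\princ$ must treat the two with a single acceptance probability $s_i := \varphi_i(1)$ in round $i$; a further reduction sets $\varphi_i(0)=0$ without loss since accepting $\pv=0$ only ends the game.

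A standard union bound first yields the benchmark: $\Pr[\text{some } \omega_A \text{ or } \omega_B \text{ ever appears}] \ge 1 - (1-1/\sqrt{\alpha})^{\sqrt{\alpha}} \ge 1-1/e$, so the optimal online search achieves expected utility $\Omega(1)$. For the upper bound I would encode the agent's deterministic best response by indicators $a_i, b_i \in \{0,1\}$ with $a_i=1 \iff s_i \ge V_{i+1}$ and $b_i = 1 \iff s_i \alpha \ge V_{i+1}$, where $V_{i+1}$ is his continuation value. With $E_i$ the probability of still being alive entering round $i$, the principal's utility splits as $U_A + U_B$, and the second term satisfies $U_B = q\sum_i E_i s_i b_i \le qn = 1/\sqrt{\alpha}$ directly.

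The crux of the argument is proving $U_A = O(1/\sqrt{\alpha})$. Listing $R_A = \{i : a_i = 1\} = \{i_1 < \cdots < i_L\}$ and writing $T_l := E_{i_l} s_{i_l}$, note that $a_{i_m}=1$ forces $b_{i_m}=1$, so restricting $V_{i_{l-1}+1}$ to its $\omega_B$-contribution gives
\[ V_{i_{l-1}+1} \;\ge\; q\alpha \sum_{m \ge l} \frac{E_{i_m}}{E_{i_{l-1}+1}}\, s_{i_m}. \]
Combining the calibrated identity $q\alpha = 1$, the constraint $s_{i_{l-1}} \ge V_{i_{l-1}+1}$, and $E_{i_{l-1}} \ge E_{i_{l-1}+1}$, this telescopes into the clean aggregate inequality $T_{l-1} \ge \sum_{m \ge l} T_m$. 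Specialising to $l=2$ yields $T_1 \ge \sum_{m \ge 2} T_m$, hence $\sum_l T_l \le 2T_1 \le 2$ and thus $U_A \le 2p = 2/\sqrt{\alpha}$. Adding the two contributions gives $U \le 3/\sqrt{\alpha}$, which together with the $\Omega(1)$ benchmark establishes the $O(1/\sqrt{\alpha})$-approximation ceiling.

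The main obstacle I anticipate is pinpointing the correct continuation-value bound. Using only the $\omega_A$-contribution in $V_{i_{l-1}+1}$ yields the weaker recursion $T_{l-1} \ge p\sum_{m \ge l} T_m$ and only $U_A = O(1)$, a factor $\sqrt{\alpha}$ too loose. The essential trick is that the $\omega_B$-contribution is boosted by the full $\alpha$ factor, which together with the deliberate matching of parameters ($q\alpha = 1$, $qn = 1/\sqrt{\alpha}$, and $p = 1/\sqrt{\alpha}$) conspires to make the aggregate constraint $T_{l-1} \ge \sum_{m \ge l} T_m$ simultaneously across all $R_A$-rounds, which is tight enough to carry the argument.
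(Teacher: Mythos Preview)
Your argument is correct and complete. The instance is well-chosen, the reduction to $\varphi_i(0)=0$ is valid (accepting a $\pv=0$ option can only terminate the game early with zero value for $\princ$ while weakly lowering the agent's continuation value, hence weakly helping $\princ$), and the key recursion $T_{l-1}\ge \sum_{m\ge l}T_m$ follows exactly as you outline: the agent's continuation value $V_{i_{l-1}+1}$ is at least the $\omega_B$-contribution along his optimal policy, $q\alpha=1$ absorbs the factor, and $E_{i_{l-1}}\ge E_{i_{l-1}+1}$ finishes the step. Specialising to $l=2$ indeed gives $\sum_l T_l\le 2T_1\le 2$, so $U_A\le 2p$ and $U_B\le qn$ combine to the desired $O(1/\sqrt\alpha)$ bound.

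Your route differs from the paper's in two respects. First, the instance: the paper keeps $n$ free with $\alpha\in[2,n^2]$, uses probabilities $(1-1/n,\,1/n-1/(n\sqrt\alpha),\,1/(n\sqrt\alpha))$ and agent values $(1,2,\alpha)$, whereas you tie $n=\lceil\sqrt\alpha\rceil$ and calibrate $q\alpha=1$ exactly. Second, the analysis: the paper singles out the \emph{first} round $i_s$ in which the agent is willing to propose the low-agent-value option, deduces from a single necessary condition that $\sum_{k>i_s}\varphi_{k2}=O(n/\sqrt\alpha)$, and then bounds the principal's utility by a before-$i_s$/at-$i_s$/after-$i_s$ case split together with a union bound. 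You instead treat \emph{all} rounds in $R_A$ uniformly via the aggregate inequality $T_{l-1}\ge\sum_{m\ge l}T_m$, which avoids the case split and the auxiliary ``assume $\agent$ always proposes $\omega_3$'' relaxation. The paper's version buys a decoupling of $n$ and $\alpha$ (a whole family of instances per $\alpha$); your version buys a cleaner, more self-contained accounting of the agent's dynamics and an exact telescoping thanks to the calibration $q\alpha=1$.
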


An IID instance with three different options suffices. One option is bad for both $\princ$ and $\agent$, but has a very high probability of $1-1/n$. The remaining two options provide the same (good) utility for $\princ$, one of which is good and the other one bad for $\agent$. The combined probability of both options is $1/n$. Since $\princ$ cannot distinguish between the two good options, in each of the rounds she has to decide to either accept both or reject both. While $\princ$ would like to accept any of the good options, $\agent$ has an incentive to wait and propose only the option that is good for both. Overall, this implies that every achievable approximation ratio for $\princ$ must be in $O(1/\sqrt{\alpha})$.

\begin{proof}[Proof of Theorem~\ref{thm:semiObliviousLB}]
    Consider the following class of IID instances with $\dist_i = \dist_j = \dist$. In $\dist$, there are three options with the following probabilities and utilities: $\omega_1$ with $p_1 = 1-1/n$ and $(\pv_1,\av_1) = (0,1)$, $\omega_2$ with $p_2 = 1/n - 1/(n\sqrt{\alpha})$ and $(\pv_2,\av_2) = (1,2)$, and $\omega_3$ with $p_3 = 1/(n\sqrt{\alpha})$ and $(\pv_3,\av_3)= (1,\alpha)$, for any $\alpha \in [2,n^2]$.

    Note that $\princ$ cannot distinguish between the latter options when they are proposed. Thus, in each round $i$, $\princ$ accepts option $\omega_1$ with probability $\varphi_{i1}$ and options $\omega_2$ and $\omega_3$ with $\varphi_{i2}$. As in the proof of Theorem~\ref{thm:generalLB}, we can assume that all $\varphi_{i1} = 0$ in an optimal scheme -- this option yields no value for $\princ$ and could only raise the incentive to wait for $\agent$.

    Consider any optimal scheme $\varphi$ for $\princ$. To obtain an upper bound on the utility of $\princ$, we assume that $\agent$ always proposes $\omega_3$ whenever it is realized\footnote{Note that due to the differences in acceptance probabilities $\varphi_{i2}$, he might actually have an incentive to wait for a later round, in which the probability that $\princ$ accepts is higher.}. For $\omega_2$, he evaluates whether or not it is profitable to wait for a later round. Suppose $\agent$ proposes $\omega_2$ in round $i$. A necessary condition for this is that the expected utility from proposing $\omega_3$ in subsequent rounds is smaller, i.e.,
    \begin{equation}\label{eq:necessaryConditionAgentProposal}
        2 \cdot \varphi_{i2} \ge \alpha \cdot\left( \sum_{k=i+1}^n \frac{1}{n\sqrt{\alpha}} \left(1-\frac{1}{n\sqrt{\alpha}}\right)^{k-i-1} \varphi_{k2} \right) =  \frac{\sqrt{\alpha}}{n} \cdot \sum_{k=i+1}^n\left(1-\frac{1}{n\sqrt{\alpha}}\right)^{k-i-1}\varphi_{k2} \enspace.
    \end{equation}
    If this condition is fulfilled, we set $\delta_i = 1$. Otherwise, we set $\delta_i = 0$. Then, using a union bound, the utility of $\princ$ from $\varphi$ can be upper bounded by
    \begin{equation}
        \label{eq:LBprincUB}
        \sum_{i = 1}^n \varphi_{i2} \left(\frac{1}{n\sqrt{\alpha}} + \delta_i \left(\frac{1}{n}-\frac{1}{n\sqrt{\alpha}}\right) \right) \enspace.
    \end{equation}
    Consider the first round $i_s$ in which $\delta_{i_s} = 1$.
    Combining \eqref{eq:necessaryConditionAgentProposal} with the fact that $\varphi_{i_s,2} \le 1$, this means that
    \[
    2 \ge \frac{\sqrt{\alpha}}{n} \cdot \sum_{k=i_s + 1}^n\left(1-\frac{1}{n\sqrt{\alpha}}\right)^{k-i_s-1}\varphi_{k2} \enspace,
    \]
    which implies
    \[
    \sum_{k=i_s+1}^n \varphi_{k2} < \frac{2n}{\sqrt{\alpha}} \cdot \left(1-\frac{1}{n\sqrt{\alpha}}\right)^{-n} < \frac{n}{\sqrt{\alpha}} \cdot 2e^{1/\alpha} \enspace.
    \]
    Using \eqref{eq:LBprincUB} and our assumption that $\alpha \in [2,n^2]$, the utility of $\princ$ is upper bounded by
    \[
    \frac{1}{n} \sum_{i = 1}^n \varphi_{i2} \left(\frac{1}{\sqrt{\alpha}} + \delta_i \left(1-\frac{1}{\sqrt{\alpha}}\right) \right) \le \frac{1}{n} \left(\frac{i_s - 1}{\sqrt{\alpha}} + 1 + 2e^{1/\alpha} \frac{n}{\sqrt{\alpha}} \right) = O(1/\sqrt{\alpha}) \enspace.
    \]
\end{proof}

For semi-oblivious proposals we design a more elaborate algorithm. The resulting action scheme provides an $\Omega(1/(\sqrt{\alpha} \log \alpha))$-appro\-xi\-mation for $\princ$. 
Our algorithm uses two subroutines, depending on the expected utility for $\agent$ (for pseudocode see Algorithm~\ref{algo:semiOblivious-approx}).

\begin{algorithm}[t]\DontPrintSemicolon
    \caption{$\Omega(1/(\sqrt{\alpha}\log \alpha))$-Approximation for Semi-Oblivious Proposals}\label{algo:semiOblivious-approx}
    \KwIn{$n$ distributions $\dist_1, \dots, \dist_n$}
    \KwOut{Action Scheme $\varphi$}
    Set $U = \bigcup_{i=1}^n \bigcup_{j=1}^{s_i} \{ (i,j) \}$.\\
    Partition $U$ into $U_L = \{(i,j) \in U \mid \sum_{\stackrel{k=1}{b_{ik} = b_{ij}}}^{s_i} p_{ik}a_{ik} < \sqrt{\alpha} \sum_{\stackrel{k=1}{b_{ik} = b_{ij}}}^{s_i} p_{ik}\}$ and $U_H = U \setminus U_L$. \\
    \For{$k=1,\dots,n$}{
        $\dist_k^{(L)} \leftarrow \dist_k$, $\dist_k^{(H)} \leftarrow \dist_k$\;
        In $\dist_k^{(L)}$ set utilities of every option $(k, j) \in U_H$ to 0 for $\princ$ and 1 for $\agent$ \;
        In $\dist_k^{(H)}$ set utilities of every option $(k, j) \in U_L$ to 0 for $\princ$ and $\sqrt{\alpha}$ for $\agent$.
    }
    Set $\varphi_L = AlgoLow(\dist^{(L)}_1,\dots, \dist^{(L)}_n)$, \quad $\varphi_H = AlgoHigh(\dist^{(H)}_1, \dots, \dist^{(H)}_n)$. \\
    \textbf{return} $\varphi_L$ or $\varphi_H$ whichever yields better expected utility for $\princ$
    %
\end{algorithm}

Consider all options with the same utility for $\princ$ in a single round. This set of options has \emph{low agent expectation} if the conditional expected utility for $\agent$ in this set of options is less than $\sqrt{\alpha}$. Otherwise, it has \emph{high agent expectation}.
For the first subroutine, we concentrate on all options with low agent expectation. Hence, this subroutine is called AlgoLow (Algorithm~\ref{algo:semiOblivious-low-expec}).
\begin{algorithm}[t]\DontPrintSemicolon
    \caption{AlgoLow}\label{algo:semiOblivious-low-expec}
    \KwIn{$n$ distributions $\dist_1,\dots, \dist_n$, where in every distribution individually, the set of options with the same value for $\princ$ has an expectation for $\agent$ of less than $\sqrt{\alpha}$}
    \KwOut{Action Scheme $\varphi$}
    Set $Q = RestrictOptions(\dist_1, \dots, \dist_n, 1/2)$.\\
            %
    Set $\ell = 1$, $\pv_1 = p_1 = 0$, $\mathcal{C}_1 = \emptyset$ \\
    \For{ $k =1, \dots, n$}{\label{algoLow:startForClasses}
        Set $p^* = \sum_{(k,j) \in Q} p_{kj}$\\
        \textbf{if} $p_\ell + p^* > 1/\sqrt{\alpha}$ \textbf{then} set $\ell = \ell +1, \mathcal{C}_\ell = \{(k,j) \in Q\}, p_\ell = p^*$;\; \quad \textbf{else} add $\mathcal{C}_{\ell} = \mathcal{C}_{\ell} \cup \{(k,j) \in Q\}$
    }\label{algoLow:endForClasses}
    Set $\pv_\ell' = \sum_{(i,j)\in \mathcal{C}_{\ell'}} p_{ij}\pv_{ij}$ for all $1 \le \ell' \le \ell$.\\
    Choose $\ell^*$ such that $\pv_{\ell^*} \ge \pv_{\ell'}$ for all $1 \le \ell' \le \ell$. \\
    Set $\varphi_{ij} = 1$ for all $(i,j) \in \mathcal{C}_{\ell^*}$.\\
    \textbf{return} $\varphi$
\end{algorithm}

Other options are considered to receive a utility of 0 for $\princ$ and, thus, are excluded from consideration. The scheme $\varphi_L$ achieves an $\Omega(1/\sqrt{\alpha})$-approximation in the instance $\dist^{(L)}$, where only options with low agent expectation generate value for $\princ$.
Similarly, for options with high agent expectation we describe procedure AlgoHigh (Algorithm~\ref{algo:semiOblivious-high-expec}).
\begin{algorithm}[t]\DontPrintSemicolon
    \caption{AlgoHigh}\label{algo:semiOblivious-high-expec}
    \KwIn{$n$ distributions $\dist_1,\dots, \dist_n$, where in every distribution individually, the set of options with the same value for $\princ$ has an expectation for $\agent$ of at least $\sqrt{\alpha}$}
    \KwOut{Action Scheme $\varphi$}
    Set $Q = RestrictOptions(\dist_1, \dots, \dist_n, 1/4)$.\\
    \For{$k = 0, \dots, \lfloor \log_2 \sqrt{\alpha} \rfloor -1 $}{
        Set $\mathcal{C}_k = \{ (i,j) \in Q \mid \frac{\sum_{(i,j') \in Q, \pv_{ij} = \pv_{ij'}}p_{ij}a_{ij}}{\sum_{(i,j') \in Q, \pv_{ij} = \pv_{ij'}}p_{ij}} \in \left[\sqrt{\alpha} \cdot 2^k, \sqrt{\alpha} \cdot 2^{k+1}\right) \}$ \\
        Set $\pv_k = \sum_{(i,j) \in \mathcal{C}_k} p_{ij}\pv_{ij}$.\\
    }
    Set $\mathcal{C}_{\lfloor \log_2 \sqrt{\alpha} \rfloor} = \{ (i,j) \in Q \mid \frac{\sum_{(i,j') \in Q, \pv_{ij} = \pv_{ij'}}p_{ij}a_{ij}}{\sum_{(i,j') \in Q, \pv_{ij} = \pv_{ij'}}p_{ij}} \in \left[\sqrt{\alpha} \cdot 2^{\lfloor \log_2 \sqrt{\alpha}\rfloor}, \alpha\right] \}$ \\
    Set $\pv_k = \sum_{(i,j) \in \mathcal{C}_k} p_{ij}\pv_{ij}$.\\
    Choose $k$ such that $\pv_k \ge \pv_{k'}$ for all $k' = 0, \dots, \lfloor \log_2 \sqrt{\alpha} \rfloor$.
    Set $\varphi_{ij} = 1$ for all $(i,j) \in \mathcal{C}_k$.\\
    \textbf{return} $\varphi$
\end{algorithm}
The scheme $\varphi_H$ achieves an $\Omega(1/(\sqrt{\alpha}\log_2{\alpha}))$-approximation in the instance $\dist^{(H)}$, where only options with high agent expectation generate value for $\princ$. In the end, we choose the better scheme for $\princ$, thereby forfeiting at most another factor 2 of her optimal expected utility. Overall, our Algorithm obtains a ratio of $\Omega(1/(\sqrt{\alpha}\log \alpha))$.

\begin{theorem}
    \label{thm:semiObliviousAlgo}
    If the agent has $\alpha$-bounded utilities and makes semi-oblivious proposals, there is a deterministic action scheme such that $\princ$ obtains an $\Omega(1/ (\sqrt{\alpha}\log \alpha))$-approximation of the expected utility for optimal (online) search.
\end{theorem}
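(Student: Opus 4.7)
The plan is to reduce the theorem to separate analyses of the two subroutines on the restricted instances $\dist^{(L)}$ and $\dist^{(H)}$. First I would observe that pointwise $\max_{(i,j)} b_{ij} \le \max_{(i,j) \in U_L} b_{ij} + \max_{(i,j) \in U_H} b_{ij}$, and since $\dist^{(L)}$ (resp.\ $\dist^{(H)}$) preserves the principal value exactly on $U_L$ (resp.\ $U_H$) and zeros it elsewhere, taking expectations gives $\opt \le \opt_L + \opt_H$. Hence Algorithm~\ref{algo:semiOblivious-approx}, by returning the better of $\varphi_L$ and $\varphi_H$, loses at most a factor of $2$ compared to guarantees on $\opt_L$ and $\opt_H$ individually; it thus suffices to show that AlgoLow achieves an $\Omega(1/\sqrt{\alpha})$-approximation of $\opt_L$ and that AlgoHigh achieves an $\Omega(1/(\sqrt{\alpha}\log\alpha))$-approximation of $\opt_H$.

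For AlgoLow I would invoke Lemma~\ref{lem:restrictOptions} with $m=1/2$ to secure $Q$ with $\sum_{(i,j) \in Q} p_{ij} b_{ij} \ge \opt_L/4$ and either total mass $\le 1/2$ or all options in a single round. The round-grouping loop partitions $Q$ into classes, where each multi-round class has total mass at most $1/\sqrt{\alpha}$. Since any two consecutive classes must have combined mass exceeding $1/\sqrt{\alpha}$, there are $O(\sqrt{\alpha})$ classes, so the best class carries principal-value mass $\Omega(\opt_L/\sqrt{\alpha})$. The low-agent-expectation hypothesis then implies that within the chosen class, summing the round-wise bound $\sum_{(k,j)\in Q} p_{kj}a_{kj} < \sqrt{\alpha}\sum_{(k,j)\in Q} p_{kj}$ yields an expected future agent utility of at most $\sqrt{\alpha}\cdot(1/\sqrt{\alpha}) = 1 \le a_{ij}$; thus with tie-breaking in favor of $\princ$, the agent proposes the first realized option from the class, and a union-bound argument mirroring the proof of Lemma~\ref{lem:OneOverClasses} delivers the $\Omega(\opt_L/\sqrt{\alpha})$ guarantee.

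For AlgoHigh I would apply Lemma~\ref{lem:restrictOptions} with $m=1/4$ to secure $\sum_Q p_{ij} b_{ij} \ge \opt_H/8$ with total mass $\le 1/4$; the conditional-expectation bucketing then yields $O(\log\alpha)$ classes so that the best class has principal-value mass $\Omega(\opt_H/\log\alpha)$. The hard part is to bound the principal's expected utility within the chosen class $\mathcal{C}_k$, where individual agent values may span the full interval $[1,\alpha]$. I would first upper bound the agent's optimal continuation value by $V \le \sum_{(i,j) \in \mathcal{C}_k} p_{ij}a_{ij} \le (\sqrt{\alpha}\cdot 2^{k+1})(1/4) = \sqrt{\alpha}\cdot 2^{k-1}$, which is at most half of the minimum group expectation $\mu_{\min} = \sqrt{\alpha}\cdot 2^k$. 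Then, for every (round, principal-value) group with conditional agent expectation $\mu$, a Markov-type lower bound on $a \in [1,\alpha]$ yields $\Pr[a \ge V_{i+1} \mid \text{group realizes}] \ge (\mu - V)/\alpha \ge 2^{k-1}/\sqrt{\alpha}$, so each such group contributes at least an $\Omega(2^k/\sqrt{\alpha})$ fraction of its principal-value mass whenever no prior proposal has occurred. Combining with $\Pr[\text{no prior proposal}] \ge \prod_{i' < i}(1 - p_{i'}) \ge 3/4$, the principal captures an $\Omega(1/\sqrt{\alpha})$ fraction of $\sum_{\mathcal{C}_k} p_{ij} b_{ij}$, which together with the $\Omega(1/\log\alpha)$ class-selection factor yields $\Omega(\opt_H/(\sqrt{\alpha}\log\alpha))$ as required.

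The main obstacle is this Markov-type step in the AlgoHigh analysis: unlike in AlgoLow, where the bounded class mass forces the agent to propose immediately, here the agent may strategically skip low-$a$ options within a group. The key leverage is the class mass bound $1/4$, which pushes the agent's continuation value $V$ comfortably below $\mu/2$ for every group in $\mathcal{C}_k$; this separation is precisely what allows the Markov bound on the in-group distribution of $a$ to extract a $\Omega(2^k/\sqrt{\alpha})$ proposal probability regardless of the agent's precise threshold, and consequently to translate the per-class $\sum p_{ij}b_{ij}$ guarantee into an actual expected principal utility with only a $\sqrt{\alpha}$-factor overhead.
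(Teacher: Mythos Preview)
Your proposal is correct and follows essentially the same approach as the paper: the decomposition $\opt \le \opt_L + \opt_H$, the analysis of AlgoLow via round-interval classes of mass at most $1/\sqrt{\alpha}$, and the analysis of AlgoHigh via $O(\log\alpha)$ conditional-expectation buckets together with a Markov-type bound on $\Pr[a \ge V]$ are exactly what the paper does in Lemmas~\ref{lem:obliviousLowExpec} and~\ref{lem:semiObliviousHighExpec}. Your continuation-value bound $V \le \sum_{(i,j)\in\mathcal{C}_k} p_{ij}a_{ij} \le \sqrt{\alpha}\cdot 2^{k-1}$ and the paper's bound $(1/4)\cdot 2E = E/2$ are the same computation, and your Markov inequality $\Pr[a \ge V] \ge (\mu - V)/\alpha$ yields the same $2^{k-1}/\sqrt{\alpha}$ floor as the paper's $E/(2\alpha - E) \ge E/(2\alpha)$.
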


AlgoLow and AlgoHigh use the procedure RestrictOptions with a parameter $m = 1/2$ and $m=1/4$, respectively. For a formal description of the subroutine see Algorithm~\ref{algo:semiOblivious-restrict} above.

Let us give a brief intuition for AlgoLow. The algorithm leverages the low expectation for $\agent$ by restricting the number of rounds from which options are accepted. More precisely, it partitions the set $Q$ computed by RestrictOptions  into $O(\sqrt{\alpha})$ many classes according to contiguous time intervals of rounds. The action scheme $\varphi$ then accepts only options from the best class for $\princ$. The overall probability that any acceptable option arrives turns out to be high enough (to obtain a $O(1/\sqrt{\alpha})$-approximation for $\princ$) and low enough (such that $\agent$ wants to propose the first acceptable option rather than wait for a better one later on).

\begin{lemma}\label{lem:obliviousLowExpec}
    If the agent has $\alpha$-bounded utilities, makes semi-oblivious proposals, and all options have low agent expectation, AlgoLow (Algorithm~\ref{algo:semiOblivious-low-expec}) constructs a deterministic action scheme such that $\princ$ obtains an $\Omega(1/\sqrt{\alpha})$-approximation of the expected utility for optimal (online) search.
\end{lemma}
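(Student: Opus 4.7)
The plan is to combine the guarantee of RestrictOptions with the low agent expectation assumption to argue that $\agent$ proposes the first option he encounters from the chosen class while enough probability mass survives early-round competition. Applying Lemma~\ref{lem:restrictOptions} with $m=1/2$ gives a set $Q$ with $\sum_{(i,j)\in Q} p_{ij}\pv_{ij}\ge \opt/4$, which either has total mass below $1/2$ or lies entirely in a single round. I would also record the ``all-or-nothing'' property inherited from RestrictOptions: within any round $i$ and any $\pv$-value, the options in $Q$ sharing that value are either all included or all excluded (since the inner for-loop in the subroutine either adds the whole set $U_k^*$ or none of it). Combined with the fact that the classes $\mathcal{C}_\ell$ are contiguous round intervals of $Q$, this ensures that the action scheme accepting exactly $\mathcal{C}_{\ell^*}$ is feasible under semi-oblivious proposals. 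To bound $\ell$, each of the $\ell-1$ class-openings happens at a round $k$ with $p_\ell + p^*_k>1/\sqrt{\alpha}$; summing, both the $p_\ell$'s and the $p^*_k$'s telescope into disjoint subsums of $\sum_{(i,j)\in Q}p_{ij}\le 1/2$, so $(\ell-1)/\sqrt{\alpha}\le 1$ and $\ell=O(\sqrt{\alpha})$. Pigeonhole yields $\sum_{(i,j)\in \mathcal{C}_{\ell^*}} p_{ij}\pv_{ij}=\Omega(\opt/\sqrt{\alpha})$.

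The main obstacle is verifying $\agent$'s incentive to propose any $(i,j)\in\mathcal{C}_{\ell^*}$ realized in round $i$. If $\mathcal{C}_{\ell^*}$ consists of options from a single round, there is nothing to wait for. Otherwise the construction invariant shows that its total mass is at most $1/\sqrt{\alpha}$, since a multi-round class cannot overflow this threshold (else a new class would have been opened). A union bound upper bounds $\agent$'s utility of waiting by $\sum_{(i',j')\in\mathcal{C}_{\ell^*},\,i'>i} p_{i'j'}\av_{i'j'}$. Here I would invoke the low-expectation assumption separately on each $(i',\pv)$-subset inside $\mathcal{C}_{\ell^*}$, which is legal thanks to the all-or-nothing structure, to obtain $\sum_{(i',j)\in\mathcal{C}_{\ell^*}}p_{i'j}\av_{i'j}<\sqrt{\alpha}\cdot m_{i'}$ for every round $i'$, where $m_{i'}$ denotes the mass of $\mathcal{C}_{\ell^*}$ in round $i'$. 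Summing over $i'>i$ yields a waiting utility of at most $\sqrt{\alpha}\cdot 1/\sqrt{\alpha}=1\le\av_{ij}$, so $\agent$ proposes by the tie-breaking convention.

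Finally, the probability that no earlier round has already realized an option of $\mathcal{C}_{\ell^*}$ is $\prod_{i'<i}(1-m_{i'})\ge 1-1/\sqrt{\alpha}\ge 1/2$ for $\alpha\ge 4$ (the case $\alpha<4$ is immediate since the target ratio is then a constant), and it is trivially $1$ in the single-round case. Linearity of expectation therefore gives expected principal utility at least $(1/2)\sum_{(i,j)\in\mathcal{C}_{\ell^*}} p_{ij}\pv_{ij}=\Omega(\opt/\sqrt{\alpha})$, as claimed.
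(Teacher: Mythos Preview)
Your argument is correct and follows essentially the same route as the paper: RestrictOptions with $m=1/2$, partition $Q$ into $O(\sqrt{\alpha})$ contiguous-round classes, pigeonhole for the best class, and combine the low-expectation bound with the $1/\sqrt{\alpha}$ mass cap to show $\agent$ proposes the first acceptable option. Two minor remarks: (i) your explicit verification of the ``all-or-nothing'' structure of RestrictOptions (whole $(i,\pv)$-groups are included or not) is a point the paper glosses over, but it is indeed needed for the scheme to be well-defined under semi-oblivious proposals and for the low-expectation hypothesis to transfer to $\mathcal{C}_{\ell^*}$; (ii) your survival-probability bound $\prod_{i'<i}(1-m_{i'})\ge 1-1/\sqrt{\alpha}$ forces the separate treatment of $\alpha<4$, whereas the paper simply uses that any class is a subset of $Q$ with mass below $1/2$, which yields $\prod_{i'<i}(1-m_{i'})\ge 1/2$ uniformly in $\alpha$ and avoids the case split.
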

\begin{proof}

    The set $Q$ returned by $RestrictOptions(\dist_1,\dots,\dist_n,1/2)$ guarantees $4 \cdot \sum_{(i,j) \in Q} p_{ij}\pv_{ij} \ge \opt$ by Lemma~\ref{lem:restrictOptions}.

    When splitting the set $Q$ into classes in the beginning of the algorithm, it is guaranteed that no class spanning more than a single round has a combined probability mass greater than $1/\sqrt{\alpha}$.
    This means that whenever a new class is opened, the mass of the previous and the current one combined are greater than $1/\sqrt{\alpha}$. Hence, there are at most $2 \cdot \sqrt{\alpha}$ many classes in total.

    Now assume class $\mathcal{C}$ is chosen by the algorithm and some acceptable option arrives in round $i$. From the assumption that utilities are $\alpha$-bounded, we know that this option has an agent value of at least 1. By a union bound, the probability that any additional acceptable option from $\mathcal{C}$ arrives in a future round is at most $1/\sqrt{\alpha}$ (as all classes that consist of a higher mass than $1/\sqrt{\alpha}$ only have options from a single round). The conditional expectation for $\agent$ for any acceptable option in a future round is at most $\sqrt{\alpha}$. Hence, $\agent$ proposes the option in round $i$.

    Similar to Algorithm~\ref{algo:log-alpha-loglog-alpha-approx}, the probability that an action $(i,j)$ from the chosen class is proposed is the combination of two independent events: (1) no other option from this class was proposed in a prior round $i' < i$ and (2) $(i,j)$ is realized in round $i$. If the chosen class only consists of a single round, the probability for (1) is trivially 1, otherwise, we can use the same argumentation as in the proof of Lemma~\ref{lem:OneOverClasses} to see that the probability that round $i$ is reached is at least $1/2$. This means that $\princ$ achieves an expected utility of at least $1/2 \cdot \sum_{(i,j) \in \mathcal{C}} p_{ij}\pv_{ij}$.

    As there are at most $2\cdot \sqrt{\alpha}$ many classes
    and the algorithm chooses the best one for $\princ$, by running AlgoLow, she will achieve an expected utility of
    \[ \frac{1}{2} \cdot \sum_{(i,j) \in \mathcal{C}} p_{ij}\pv_{ij} \ge \frac{1}{4\sqrt{\alpha}} \cdot \sum_{(i,j) \in Q} p_{ij}\pv_{ij} \ge \frac{1}{16\sqrt{\alpha}} \opt = \Omega\left(\frac{1}{\sqrt{\alpha}}\right)\opt \enspace.
    \]

\end{proof}
AlgoLow classifies options only based on utility for $\princ$ and time intervals. AlgoHigh instead uses an approach similar to Algorithm~\ref{algo:log-alpha-loglog-alpha-approx}, namely classifying good options for $\princ$ by their utility for $\agent$. Since in the semi-oblivious scenario, options from a single round $i$ with the same utility for $\princ$ cannot be distinguished, the algorithm classifies options by their expected utility for $\agent$ such that the expectation for $\agent$ of all options in a single class differs by no more than a factor 2. Finally, the algorithm identifies the best one of these $O(\log \alpha)$ many classes.

\begin{lemma}\label{lem:semiObliviousHighExpec}
    If the agent has $\alpha$-bounded utilities, makes semi-oblivious proposals, and all options have high agent expectation, AlgoHigh (Algorithm~\ref{algo:semiOblivious-high-expec}) constructs a deterministic action scheme such that $\princ$ obtains an $\Omega(1/(\sqrt{\alpha}\log\alpha))$-approximation of the expected utility for optimal (online) search.
\end{lemma}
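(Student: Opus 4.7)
My plan is to follow the template of Lemma~\ref{lem:OneOverClasses}, adapted to the semi-oblivious setting where $\princ$ cannot distinguish options within the same ``bucket''---the set of options in the same round sharing a common principal value. Write $\bar a_{i,b}$ for the conditional expectation of $\av$ inside bucket $(i,b)$. I first invoke Lemma~\ref{lem:restrictOptions} with $m=1/4$ to obtain a set $Q$ with $\sum_{(i,j) \in Q} p_{ij}\pv_{ij} \ge \opt/8$ and either total mass at most $1/4$ or confined to a single round. The algorithm partitions $Q$ into $\lfloor \log_2 \sqrt\alpha\rfloor+1 = O(\log\alpha)$ geometric classes $\mathcal{C}_k$ indexed by the bucket mean, and selects the class $\mathcal{C}_k$ maximizing $\sum p_{ij}\pv_{ij}$. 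This step alone preserves an $\Omega(1/\log\alpha)$ fraction of $\sum_Q p_{ij}\pv_{ij}$, so it suffices to show that the scheme $\varphi_{ij} = 1$ for $(i,j) \in \mathcal{C}_k$ extracts an $\Omega(1/\sqrt\alpha)$ fraction of $\sum_{(i,j)\in\mathcal{C}_k} p_{ij}\pv_{ij}$.

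The second step bounds $\agent$'s continuation value $W_i$ from round $i$ onward under this scheme. A union bound over $\mathcal{C}_k$, combined with the class-defining cap $\bar a_{i',b'} < \sqrt\alpha \cdot 2^{k+1}$ (respectively $\alpha$ in the top class) and the total-mass bound $1/4$, yields $W_i \le \sqrt\alpha \cdot 2^{k-1}$ (respectively $\alpha/4$). This is at most half of the class-defining lower bound $\sqrt\alpha \cdot 2^k$ (respectively $\alpha/2$) on $\bar a_{i,b}$ for any bucket in $\mathcal{C}_k$, so $\bar a_{i,b} \ge 2 W_i$ for every accepting bucket. In expectation over a bucket realization, $\agent$ thus weakly prefers proposing to waiting.

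The key obstacle---absent in the conscious-proposal case of Lemma~\ref{lem:OneOverClasses}---is that $\agent$ acts on the realized value $a_{ij}$ rather than on the bucket mean, and could rationally skip a low realization inside an otherwise high-mean bucket. I plan to bridge this gap with a reverse Markov inequality applied inside each accepting bucket: since $a_{ij} \in [1,\alpha]$ and the conditional expectation equals $\bar a_{i,b}$, one obtains
\[
\Pr[a_{ij} \ge W_i \mid \text{bucket } (i,b) \text{ realizes}] \ge \frac{\bar a_{i,b} - W_i}{\alpha} \ge \frac{\bar a_{i,b}}{2\alpha} \ge \frac{1}{2\sqrt\alpha},
\]
where the final step uses $\bar a_{i,b} \ge \sqrt\alpha$ in every class (including the top class, where $\bar a_{i,b} \ge \alpha/2 \ge \sqrt\alpha$ for $\alpha \ge 4$; the case $\alpha < 4$ is trivial). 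Hence, conditioned on a bucket of $\mathcal{C}_k$ realizing, $\agent$ proposes with probability at least $1/(2\sqrt\alpha)$.

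To assemble the lemma, I combine the three independent factors---arrival probability $m_{i,b}$, proposal probability $\ge 1/(2\sqrt\alpha)$, and ``no prior proposal'' probability $\ge 3/4$ (using $\sum_i m_i \le 1/4$)---and sum over all accepting buckets. By linearity of expectation, $\princ$'s expected utility is at least $\tfrac{3}{8\sqrt\alpha} \sum_{(i,j) \in \mathcal{C}_k} p_{ij}\pv_{ij} = \Omega(\opt/(\sqrt\alpha \log\alpha))$. The single-round corner case of $Q$ is handled separately by noting that $W_i = 0$ for the active round, so $\agent$ always proposes, which actually yields the stronger $\Omega(\opt/\log\alpha)$ bound.
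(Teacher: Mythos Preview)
Your proposal is correct and follows essentially the same approach as the paper: invoke \textsc{RestrictOptions} with $m=1/4$, partition $Q$ into $O(\log\alpha)$ classes by bucket mean, pick the best class, bound the continuation value via a union bound, and then use a reverse-Markov argument inside each bucket to show $\agent$ proposes with probability $\Omega(1/\sqrt\alpha)$. Your treatment is in fact slightly more careful than the paper's, as you explicitly carry the ``no prior proposal'' factor (bounded by $3/4$) and separately dispatch the single-round corner case of $Q$; the paper's version handles these implicitly.
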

\begin{proof}
    Using $RestrictOptions(\dist_1,\dots,\dist_n,1/4)$, the algorithm first identifies the best options for $\princ$. By Lemma~\ref{lem:restrictOptions}, it holds that $8 \cdot \sum_{(i,j) \in Q} p_{ij}\pv_{ij} \ge \opt$.

    The set $Q$ is then further partitioned into $\lfloor \log_2 \sqrt{\alpha}\rfloor + 1$ smaller classes depending on their conditional expectation for $\agent$, namely such that the conditional expectation for $\agent$ of the elements in a class differs by at most a factor 2. Then, the class $\mathcal{C}$ such that $\sum_{(i,j)\in \mathcal{C}} p_{ij}\pv_{ij} \ge \sum_{(i,j) \in \mathcal{C}'} p_{ij}\pv_{ij}$ for all classes $\mathcal{C}'$ is chosen. This means that
    \[(\lfloor \log_2 \sqrt{\alpha}\rfloor + 1) \cdot \sum_{(i,j) \in \mathcal{C}} p_{ij}\pv_{ij} \ge \sum_{(i,j) \in Q} p_{ij}\pv_{ij} \ge 1/8 \cdot \opt \enspace.
    \]
    We denote by $E$ the lower bound for the expected $\agent$ utility of the interval of the chosen class $\mathcal{C}$. Recall that all utilities for $\agent$ are in the interval $[1,\alpha]$. This means that with a probability of at least $E/(2\alpha-E) \ge E/(2\alpha)$, a random element from $\mathcal{C}$ has an agent utility of at least $E/2$ -- otherwise, an expected utility of at least $E$ for $\agent$ would not be possible. Since the probability that another allowed option in a later round arrives is at most 1/4 due to the choice of $m=1/4$ for the call to $RestrictOptions$ and the expectation conditional on arrival of an allowed option is at most $2E$, $\agent$ always proposes the first option with a utility of at least $E/2$. This in turn means that the agent will propose the first element from $\mathcal{C}$ he encounters with a probability of at least $E/(2\alpha)$. Since $E \ge \sqrt{\alpha}$, the probability that $\agent$ proposes the first allowed element is at least $1/(2\sqrt{\alpha})$.

    In total, this means that $\princ$ achieves an expected utility of at least

    \[\frac{1}{2\sqrt{\alpha}} \cdot \sum_{(i,j) \in \mathcal{C}} p_{ij}\pv_{ij} \ge \frac{1}{2\sqrt{\alpha}} \cdot \frac{1}{8 \cdot (\lfloor \log_2 \sqrt{\alpha}\rfloor + 1)}\cdot \opt = \Omega\left(\frac{1}{\sqrt{\alpha}\log_2\alpha} \right) \opt \enspace. \]
\end{proof}

In contrast to Corollary~\ref{cor:loglogWithZero}, the result of Theorem~\ref{thm:semiObliviousAlgo} does \emph{not} generalize to the case when $\agent$ has options with utility 0, and $\alpha$ is the ratio of maximum and minimum non-zero utility. A simple adaptation of the proof of Theorem~\ref{thm:semiObliviousLB} shows that in this case all algorithms must have a ratio in $O(1/n)$, even when all utilities for $\agent$ are $a_{ij} \in \{0,1\}$.

We adapt the instance from the proof of Theorem~\ref{thm:semiObliviousLB} as follows. We set $p_1 = 1-1/n$ and $(b_1,a_1) = (0,0)$, $p_2 = 1/n-1/n^2$ and $(b_2, a_2) = (1,0)$, $p_3 = 1/n^2$ and $(b_3,a_3) = (1,1)$. Note that $\alpha = 1$ here, as there is only a single non-zero utility value for $\agent$.

Consider any deterministic scheme for $\princ$. Clearly, $\agent$ does not want to propose any option of value 0 for him until the last round in which options $p_2$ and $p_3$ are acceptable. By a union bound, the overall probability to propose an option of value 1 for $\princ$ is at most $(n-1) \cdot 1/n^2 + 1/n < 2/n$, so the expected utility of $\princ$ is in $O(1/n)$. By searching through the options herself, $\princ$ obtains a value of at least $1-(1-1/n)^n \ge 1-1/e$. Hence, every deterministic scheme is $O(1/n)$-approximate, even in this case with $\alpha = 1$. A similar argument shows this result also for randomized schemes.

	\section{Misalignment of Principal and Agent Utility}\label{sec:principal-agent-ratio}

    In this section, we consider performance guarantees based on the amount of misalignment of principal and agent utility. Formally, let $\beta \ge 1$ be the smallest number such that
    \[
    	\frac{1}{\beta} \cdot \frac{\av_{ij}}{\av_{i'j'}} \le \frac{\pv_{ij}}{\pv_{i'j'}} \le \beta \cdot \frac{\av_{ij}}{\av_{i'j'}}
    \]
    for any two options $\omega_{ij}$ and $\omega_{i'j'}$ in the instance. We assume that the preference of $\princ$ between any pair $\omega_{ij}, \omega_{i'j'}$ of options is shared by $\agent$ -- up to a factor of at most $\beta$. We term this $\beta$-bounded utilities.

    For most of the section, we assume that all utility values are strictly positive. Suppose we choose an arbitrary realization $\omega_{i'j'}$. Divide all utility values of $\princ$ for all realizations by $\pv_{i'j'}$, and all utility values of $\agent$ by $\av_{i'j'}$. Note that this adjustment neither affects the incentives of the players nor the approximation ratios of our algorithms. Considering $\omega_{ij}$ with the adjusted utilities, we see that $1/\beta \cdot \pv_{ij}/\av_{ij} \le 1 \le \beta \cdot \pv_{ij}/\av_{ij}$, and thus $1/\beta \le \pv_{ij}/\av_{ij} \le \beta$ for all $\omega_{i'j'}$. This condition turns out to be convenient for our analysis.

    Our main idea is to use $O(\log \beta)$ clusters $\mathcal{C}_k$ to group all the options that have a utility ratio between $2^k$ and $2^{k+1}$, i.e.,
    \[
    	\mathcal{C}_k = \{ \omega_{ij} \in \Omega \mid 2^k \le \pv_{ij}/\av_{ij} < 2^{k+1} \}
    \]
    for $k = \lfloor \log 1/\beta \rfloor, \ldots, \lceil \log \beta \rceil$. Our deterministic scheme restricts the acceptable options to a single cluster $\mathcal{C}_{k^*}$. Note that here $\princ$ is assumed to see $\av_{ij}$ upon a proposal. The principal determines the cluster $k^*$, such that the best response by $\agent$ (i.e., his optimal online algorithm applied with the options from that cluster) delivers the largest expected utility for $\princ$.

    \begin{theorem}
        If principal and agent have $\beta$-bounded utilities, there is a deterministic action scheme such that $\princ$ obtains an $\Omega(1/\log \beta)$-approximation of the expected utility for optimal (online) search.
    \end{theorem}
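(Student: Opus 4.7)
The plan is to exploit the cluster structure: inside each $\mathcal{C}_k$ the utilities of principal and agent are proportional up to a factor of two, so having the agent run his own optimal online stopping rule on a single cluster is nearly as good for the principal as running her own optimal online stopping rule on that cluster. First I would upper bound the benchmark $\opt$ by the offline value $\text{OFF} := \expec[\max_i \pv_{i,O_i}]$, which dominates the optimal online value for $\princ$. Using the decomposition $\max_i \pv_{i,O_i} \le \sum_k \max_i \pv_{i,O_i}\, \mathbf{1}[O_i \in \mathcal{C}_k]$ across the $O(\log \beta)$ clusters and an averaging argument, there must exist some $k^*$ whose restricted value $\text{OFF}_{k^*} := \expec[\max_i \pv_{i,O_i}\, \mathbf{1}[O_i \in \mathcal{C}_{k^*}]]$ is at least $\text{OFF}/O(\log \beta)$.

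Next, I would fix the deterministic scheme $\varphi_{ij} = 1$ iff $\omega_{ij} \in \mathcal{C}_{k^*}$ and analyze the agent's best response. Since no option outside $\mathcal{C}_{k^*}$ can ever be accepted, the agent effectively executes some optimal online stopping rule $\pi^a$ maximizing his expected utility across those rounds in which the realized option falls in $\mathcal{C}_{k^*}$. Let $\pi^p$ denote the principal's own optimal online stopping rule on the same restricted instance, and let $V_p(\sigma)$, $V_a(\sigma)$ denote the expected principal and agent values under a strategy $\sigma$. The defining property of $\mathcal{C}_{k^*}$ gives $2^{k^*}\av_{ij} \le \pv_{ij} < 2^{k^*+1}\av_{ij}$ for every $\omega_{ij} \in \mathcal{C}_{k^*}$, so by linearity of expectation any stopping rule $\sigma$ supported on the cluster satisfies $2^{k^*}V_a(\sigma) \le V_p(\sigma) < 2^{k^*+1}V_a(\sigma)$. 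Combined with the agent's optimality $V_a(\pi^a) \ge V_a(\pi^p)$, this chains into $V_p(\pi^a) \ge 2^{k^*} V_a(\pi^a) \ge 2^{k^*} V_a(\pi^p) \ge V_p(\pi^p)/2$.

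Finally, the classical prophet inequality applied to the independent random variables $\pv_{i,O_i}\, \mathbf{1}[O_i \in \mathcal{C}_{k^*}]$ gives $V_p(\pi^p) \ge \text{OFF}_{k^*}/2$. Chaining the three inequalities yields
\[
V_p(\pi^a) \;\ge\; \text{OFF}_{k^*}/4 \;\ge\; \text{OFF}/O(\log\beta) \;\ge\; \opt/O(\log\beta),
\]
which is the desired $\Omega(1/\log\beta)$-approximation. The main obstacle is the step transferring agent-optimal to principal-optimal stopping inside a cluster, but the two-sided proportionality $2^{k^*}\le \pv_{ij}/\av_{ij} < 2^{k^*+1}$ keeps that loss to a constant factor, so the only logarithmic loss in the overall ratio comes from the pigeonhole selection of $\mathcal{C}_{k^*}$.
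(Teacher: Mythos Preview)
Your proof is correct and follows the same high-level strategy as the paper: partition the options into $O(\log\beta)$ clusters $\mathcal{C}_k$ by the ratio $\pv_{ij}/\av_{ij}$, restrict the acceptable set to a single cluster, and use the two-sided proportionality inside a cluster together with a pigeonhole/averaging step over the $O(\log\beta)$ clusters.

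The one substantive difference is how you bridge the agent's behaviour to the principal's benchmark inside the chosen cluster. The paper compares the agent's online optimum directly against the principal's \emph{offline} best option in $\mathcal{C}_k$, asserting $\av(\agent,k)\ge \av(\princ,k)$ and then $\pv(\agent,k)\ge \pv(\princ,k)/2$. You instead insert the principal's \emph{online} optimum $\pi^p$ as an intermediate object, use agent optimality $V_a(\pi^a)\ge V_a(\pi^p)$ together with the factor-$2$ proportionality to get $V_p(\pi^a)\ge V_p(\pi^p)/2$, and then invoke the prophet inequality to pass from $V_p(\pi^p)$ to the offline value $\mathrm{OFF}_{k^*}$. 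This detour costs one extra constant factor but is arguably cleaner: the inequality $V_a(\pi^a)\ge V_a(\pi^p)$ is immediate (both are online strategies and $\pi^a$ is agent-optimal), whereas the paper's comparison of the agent's \emph{online} value to the agent value of the principal's \emph{offline} choice is not literally true pointwise and implicitly needs a prophet-inequality-type step as well. Your version makes that step explicit. Otherwise the two arguments coincide.
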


    \begin{proof}
		Consider any cluster $\mathcal{C}_k$. We denote by $\pv(\agent, k)$ and $\av(\agent, k)$ the expected utility for $\princ$ and $\agent$ when $\princ$ uses $\mathcal{C}_k$ to determine $\varphi$. Now consider a hypothetical algorithm for $\princ$ that observes all realizations and chooses the best option from $\mathcal{C}_k$ for $\princ$ if possible. If there is no such option, it obtains a utility of 0. Let $\pv(\princ, k)$ and $\av(\princ, k)$ be the expected utility of the hypothetical algorithm for $\princ$ and $\agent$, respectively. Clearly, $\pv(\princ, k) \ge \pv(\agent, k)$ and $\av(\agent, k) \ge \av(\princ, k)$, but also, by definition of $\mathcal{C}_k$,
		\[
			\pv(\agent, k) \ge \av(\agent, k) \cdot 2^k \ge \av(\princ, k) \cdot 2^k \ge \pv(\princ, k) / 2
		\]
		Now consider the best option for $\princ$ in hindsight. The best-option-algorithm for cluster $\mathcal{C}_k$ picks the best option in hindsight if it comes from cluster $\mathcal{C}_k$. Otherwise, it returns a value of 0. Let $\pv^*_k$ be the expected utility of this algorithm for $\princ$, and let $OPT$ be the expected utility of the best option in hindsight for $\princ$. Then
		\[
		OPT = \sum_{k= \lfloor \log 1/\beta \rfloor}^{ \lceil \log \beta \rceil} \pv^*_k \le  \sum_{k= \lfloor \log 1/\beta \rfloor}^{ \lceil \log \beta \rceil} \pv(\princ, k) \le \sum_{k= \lfloor \log 1/\beta \rfloor}^{ \lceil \log \beta \rceil}	\pv(\agent, k) \cdot 2 \enspace.
		\]
		Hence, since the scheme chooses the cluster $k^*$ that maximizes $\pv(\agent, k^*)$, we obtain an $\Omega(1/\log \beta)$-approximation.
	\end{proof}

By treating all options of utility 0 for $\agent$ in a separate class and ignoring all options of utility 0 for $\princ$, we can again adapt the performance guarantee also to instances, in which all utility pairs of $\agent$ and $\princ$ with strictly positive entries are $\beta$-bounded.

%
\begin{corollary}
	 If principal and agent have $\beta$-bounded utilities for the set of options with only strictly positive utilities, there is a deterministic action scheme such that $\princ$ obtains an $\Omega(1/\log \beta)$-approximation of the expected utility for optimal (online) search.
\end{corollary}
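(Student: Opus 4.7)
The plan is to extend the proof of the theorem by introducing one extra cluster for the options where $\agent$'s utility vanishes. Concretely, I would define the standard clusters $\mathcal{C}_k = \{\omega_{ij} \mid \av_{ij} > 0,\ \pv_{ij} > 0,\ 2^k \le \pv_{ij}/\av_{ij} < 2^{k+1}\}$ for $k = \lfloor \log 1/\beta \rfloor, \ldots, \lceil \log \beta \rceil$, together with one extra cluster $\mathcal{C}_\ast = \{\omega_{ij} \mid \av_{ij} = 0,\ \pv_{ij} > 0\}$. Options with $\pv_{ij} = 0$ contribute nothing to $\opt$ and can only create an incentive for $\agent$ to wait, so I would discard them entirely. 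As in the theorem, the action scheme accepts exactly the options from a single cluster $\mathcal{C}^*$, chosen to maximize $\princ$'s expected utility.

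For each standard cluster $\mathcal{C}_k$, the argument from the theorem applies unchanged and yields $\pv(\agent, k) \ge \pv(\princ, k)/2$, where $\pv(\princ, k)$ is the expected utility of the best option from $\mathcal{C}_k$ in hindsight. For the extra cluster $\mathcal{C}_\ast$, every acceptable option gives $\agent$ zero utility, so he is fully indifferent between proposing, discarding, and which option to propose. Under the tie-breaking convention in favor of $\princ$, $\agent$ then effectively plays an optimal online stopping strategy for $\princ$ restricted to $\mathcal{C}_\ast$. The classical prophet inequality of Krengel--Sucheston therefore yields $\pv(\agent, \ast) \ge \pv(\princ, \ast)/2$, where $\pv(\princ, \ast)$ denotes the expected utility of the best option from $\mathcal{C}_\ast$ in hindsight.

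Finally, since the best option in hindsight lies in exactly one cluster, linearity of expectation gives
\[
\opt \,\le\, \pv(\princ, \ast) + \sum_{k = \lfloor \log 1/\beta \rfloor}^{\lceil \log \beta \rceil} \pv(\princ, k) \,\le\, 2\,\pv(\agent, \ast) + 2\sum_{k} \pv(\agent, k).
\]
There are only $O(\log \beta)$ clusters in total, so an averaging argument shows that the best cluster $\mathcal{C}^*$ attains expected utility $\Omega(\opt / \log \beta)$ for $\princ$. The main subtlety I expect is the treatment of $\mathcal{C}_\ast$: without the tie-breaking convention, $\agent$'s complete indifference in $\mathcal{C}_\ast$ could, in principle, lead to arbitrarily poor behavior (e.g., never proposing), so the convention is exactly what rescues the prophet-inequality step. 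Beyond this, the proof is a bookkeeping extension of the theorem.
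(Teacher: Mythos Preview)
Your proposal is correct and follows exactly the approach the paper sketches: treat options with $\av_{ij}=0$ as a separate cluster, discard options with $\pv_{ij}=0$, and otherwise reuse the theorem's clustering argument. The paper gives only a one-line justification before the corollary; your handling of the extra cluster via tie-breaking and the prophet inequality mirrors the paper's own Remark~\ref{rem:loglogWithZero} for the analogous $\alpha$-bounded result, so your fleshed-out details match the intended argument.
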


The bound in Theorem~\ref{thm:generalLB} for conscious proposals can be applied rather directly to this case, i.e., when treating the 0-utility options for $\princ$ in a separate class. Also the bounds for oblivious and semi-oblivious proposals in Theorems~\ref{thm:fullyObliviousLB} and~\ref{thm:semiObliviousLB} apply directly, since in these instances $\beta = \Theta(\alpha)$. This implies that any algorithm has a ratio in $O(\log \log \beta/ \log \beta)$ for conscious proposals, in $O(1/\sqrt{\beta})$ for semi-oblivious proposals, and in $O(1/\beta)$ for oblivious proposals. Finally, it is trivial to obtain a $\Omega(1/\beta)$-approximation for $\princ$ in case of $\beta$-bounded utilities and oblivious proposals -- simply accept every option proposed by $\agent$. The bound on the ratio is a simple consequence of $\beta$-boundedness. As such, note that $\princ$ is not required to know $\beta$ to obtain the approximation.

\bibliographystyle{plainurl}
\bibliography{literature,martin}


\end{document}